\newtheorem{theorem}{Theorem}
\newtheorem{proposition}[theorem]{Proposition}
\newtheorem{lemma}[theorem]{Lemma}
\newtheorem{corollary}[theorem]{Corollary}
\newtheorem{fact}[theorem]{Fact}
\newtheorem{example}{Example}
\newtheorem{definition}{Definition}
\renewcommand{\L}{\mathcal L}
\newcommand{\PS}{\text{\it PS}}
\newcommand{\B}{\text{\boldmath $B$}}
\newcommand{\N}{\mathcal N}
\newcommand{\T}{\mathcal T}
\newcommand{\Win}{\mathcal W}
\newcommand{\Veto}{\mathcal V}
\newcommand{\Winr}{\mathcal{W}^+}
\newcommand{\Vetor}{\mathcal{V}^+}
\newcommand{\I}{\mathcal I}
\newcommand{\D}{\mathcal D}
\newcommand{\J}{\mathcal A}
\newcommand{\tuple}[1]{\left\langle #1 \right\rangle}
\newcommand{\set}[1]{\left\{ #1 \right\}}
\newcommand{\prof}[1]{\text{\boldmath $#1$}}
\newcommand{\maj}{\mathsf{maj}}
\newcommand{\IFF}{\mbox{     \textsc{iff}     }}
\newcommand{\A}{\mathcal A}
\newcommand{\EA}{\mathcal{A}^\T}
\newcommand{\putaway}[1]{}
\renewcommand{\phi}{\varphi}
\begin{document}

\title{Negotiable Votes\thanks{This paper 
improves and extends work previously presented at the 24th International Joint Conference on Artificial Intelligence (IJCAI'15) \citep{GrandiEtAlIJCAI2015}. 
The paper has benefited from the feedback of the anonymous reviewers at IJCAI'15, COMSOC'14 and LOFT'14. The authors are greatly indebted to Edith Elkind and Ulle Endriss for 
valuable comments on earlier versions of this work, and to the participants of the Second International Workshop on Norms, Actions and Games (NAG'16) in Toulouse, the 2015 LABEX CIMI Pluridisciplinary Workshop on Game Theory in Toulouse, the 5th International Workshop on Computational Social Choice (COMSOC'14) in Pittsburgh, the 11th Conference on Logic and the Foundations of Game and Decision Theory (LOFT'14) in Bergen, the 12th Meeting of the Social Choice and Welfare Society held in Boston in 2014, the COST IC1205 Workshop on Iterative Voting and Voting Games held in Padova in 2014, the Workshop on Fair Division, Voting and Computational Complexity held in Graz in 2014, and the XXV Meeting of AILA held in Pisa in 2014, where this work has been presented.}}
\author[1]{Umberto Grandi}
\author[2]{Davide Grossi}
\author[3]{Paolo Turrini}

\affil[1]{\small{IRIT, University of Toulouse}} 
\affil[2]{\small{Department of Computer Science, University of Liverpool}}
\affil[3]{\small{Department of Computing, Imperial College London}}
\date{}
\maketitle


\maketitle

\begin{abstract}
We study voting games on binary issues, where voters hold an objective over the outcome of the collective decision and are allowed, before the vote takes place, to negotiate their voting strategy with the other participants. We analyse the voters' rational behaviour in the resulting two-phase game, showing under what conditions undesirable equilibria can be removed and desirable ones sustained as a consequence of the pre-vote phase.

\end{abstract}


\section{Introduction}

Group decision-making is a topic of increasing relevance for Artificial Intelligence (AI). Addressing the problem of how groups of self-interested, autonomous entities can take the `right' decisions together is key to achieve intelligent behaviour in systems dependent on the interaction of autonomous entities. Against this backdrop, social choice theory has by now become a standard tool in the toolbox of AI and, especially, multi-agent systems (henceforth, MAS), see, e.g.,  
\citep{shoham08multiagent,HandbookCOMSOC2015}.  Voting in particular has been extensively studied as a prominent group decision-making paradigm in MAS. Despite this, only a small and very recent body of literature \citep{DesmedtElkind2010,XiaConitzer2010,ObrazsovaEtAlSAGT2013,ElkindEtAlIJCAI2015} is starting to focus on voting as a form of strategic, non-cooperative, interaction. 


More specifically---and this is the focus of the current contribution---no work with the notable exception of the literature on iterative voting \citep{MeirEtAl2010}, \citep{XiaConitzer2010},\citep{ LevRosenscheinAAMAS2012},\citep{branzei13how},\citep{MeirEtAlEC2014},\citep{LevRosenscheinJAIR} has ever studied how voting behavior in rational agents is influenced by strategic forms of interaction that precede voting---like persuasion, or negotiation. Literature in social choice has recognised that interaction preceding voting can be an effective tool to induce opinion change and achieve compromise solutions \citep{dryzek03social,list11group} while in game theory pre-play negotiations are known to be effective in overcoming inefficient outcomes caused by players' individual rationality \citep{JW05}. 

%
%


\subsection{Rational, Truthful, but Inefficient Votes}

Consider a multiple referendum, where a group of voters cast yes/no opinions on a number of issues, 
which are then aggregated independently in order to obtain the group's opinion on those issues. An instance of this situation 
is represented in the following table:

\begin{center}
\begin{tabular}
{lccc}
& issue 1 & issue 2 & issue 3 \\
\toprule
voter 1 & $1$ & $0$ & $0$ \\
voter 2 & $0$ & $1$ & $0$ \\
voter 3 & $0$ & $0$ & $1$ \\
\midrule
\rowcolor[gray]{.9}
Majority & $0$ & $0$ & $0$
\end{tabular}
\end{center}
\smallskip
where three voters express their binary opinions on each of three issue (1 for acceptance, 0 for rejection), which are aggregated one by one using the majority rule. Voters typically approach such a referendum with some preference over its outcomes. So, to continue on the example above, let us assume that each voter $i$ is interested in having the group accept issue $i$. Given these goals, it is rational for each voter to cast a truthful vote, that is a vote in which each voter $i$ accepts issue $i$: if the voter is not pivotal (that is, it belongs to a minority), its vote will not count, but if it is pivotal, casting a truthful vote will make its own opinion become majority. The example---which is also an instance of the so-called Ostrogorski paradox \citep{anscombe76frustration,daudt76ostrogorski}---shows a situation in which truthful voting leads to an inefficient majority. That is, a majority that rejects all issues and in so doing fails to meet the goal of each voter. Importantly, in the example there are a number of truthful profiles that would lead to an efficient majority (e.g., the profile where each voter accepts every issue). 
This is a deep tension within (binary) voting: when sincere voting is rational (weakly dominant) it may turn out to be inefficient. As we will see, this is not a feature of the majority rule alone, but of a large class of well-behaved aggregation rules. Understanding mechanisms which can resolve such inefficiencies is, we argue, an important step towards the development of human-level group decision-making capabilities in artificial intelligence.

\subsection{Contribution \& Scientific Context}

In this paper we study pre-vote negotiations in voting games over binary issues, where voters hold a simple type of lexicographic preference over the set of issues: they hold an objective about a subset of them while they are willing to strike deals on the remaining ones. Voters can influence one another before casting their ballots by transferring utility in order to obtain a more favourable outcome in the end. We show that this type of pre-vote interaction has beneficial effects on voting games by refining their set of equilibria, and in particular by guaranteeing the efficiency of truthful ones. Specifically, we isolate precise conditions under which `bad' equilibria---i.e., truthful but inefficient ones---can be overcome, and `good' ones can be sustained. Our work relates directly to several on-going lines of research in social choice, game theory and their applications to MAS.

\paragraph{Binary Aggregation} 
Aggregation and merging of information is a long studied topic in AI \citep{konieczny02merging,chopra06social,everaere07strategy} and judgment aggregation has become an influential formalism in AI \citep{endriss16judgment}. This setting is also known as {\em voting in multiple referenda} \citep{LacyNiou2000}. Binary voting can be further enriched by imposing that individual opinions also need to satisfy a set of integrity constraints, like in binary voting with constraints \citep{GrandiEndrissAIJ2013} and judgment aggregation \citep{dietrich07arrow,GrossiPigozzi2014}. 
Standard preference aggregation, which is the classical framework for voting theory, is a special case of binary voting with constraints \citep{dietrich07arrow}. Voting with constraints will be touched upon towards the end of the paper.
Research in binary voting and judgment aggregation focused on the (non-)manipulability of judgment aggregation rules \citep{DietrichListEP2007,BotanEtAlAAMAS2016} and its computational complexity \citep{EndrissEtAlJAIR2012,BaumeisterEtAlMMS2015}, but a fully-fledged theory of non-cooperative games in this setting has not yet been developed and that is our focus here. 

\paragraph*{Election Control}
The field of computational social choice has extensively studied decision problems involved with various forms of election control (see \citep{faliszewski16control}, for a recent overview) such as: lobbying \citep{ChristianEtAl2007,BredereckJAIR2014}, bribery \citep{BaumeisterEtAlMMS2015,HazonEtAlIJCAI2013}, modelled from the single agent perspective of a lobbyist or briber who tries to influence voters' decisions through monetary incentives, or from the perspective of a coalition of colluders \citep{bachrach11coalitional}. Here we study a form of control akin to bribery, but where any voter can 'bribe' any other voter. Our work can be seen as an effort to develop a fully fledged game-theoretic model of this type of control. Our focus is on equilibrium analysis and we therefore sidestep issues of computational complexity in this paper.

\paragraph{Equilibrium refinement}
Non-cooperative models of voting are known to suffer from a multiplicity of equilibria, many of which appear counterintuitive, not least because of their inefficiency. Equilibrium selection or refinement is a vast and long-standing research program in game theory \citep{meyerson78refinements}. Models of equilibrium refinement have been applied to voting games in the literature on economics \citep{gueth91majority,kim96equilibrium} and within MAS especially within the above-mentioned iterative voting literature \citep{DesmedtElkind2010,ObrazsovaEtAlSAGT2013,ObraztsovaEtAlAAAI2015}, which offers a natural strategy for selecting equilibria through the process of best response dynamics that starts from a profile of truthful votes. Our model tackles the same issue of refinement of equilibria in the context of binary voting, and focusing on those equilibria that are truthful and efficient. Unlike in iterative voting, our model is a two-phase model where equilibria are selected by means of an initial pre-vote negotiation phase, followed by voting.

\paragraph{Boolean Games}
We model voting strategies in binary aggregation with a model that generalises the well-known boolean games \citep{harrenstein01boolean,wooldridge13incentive}: voters have control of a set of propositional variables, i.e., their ballot, and have specific goal outcomes they want to achieve. In our setting the goals of individuals are expressed  \emph{over the outcome} of the decision process, thus on variables that---in non degenerate forms of voting---do not depend on their single choice only. Unlike boolean games, where each actor uniquely controls a propositional variable, in our setting the control of a variable is shared among the voters and its final truth value is determined by a voting rule. A formal relation with boolean games will be provided towards the end of the paper. 

\paragraph{Pre-play Negotiations} 
We model pre-vote negotiations as a pre-play interaction phase, in the spirit of \citep{JW05}. During this phase, which precedes the play of a normal form game---the voting game--- players are entitled to sacrifice part of their final utility in order to convince their opponents to play certain strategies, which in our case consist of voting ballots. In doing so we build upon the framework of endogenous boolean games \citep{Turrini16}, which enriches boolean games with a pre-play phase. We abstract away from the sequential structure of the bargaining phase, modelled in closely related frameworks \citep{GorankoT16}. 

\subsection{Outline of the Paper}

The paper is organised as follows.
In Section~\ref{sec:preliminaries} we present the setting of binary aggregation, defining the (issue-wise) majority rule as well as general classes of aggregation procedures, which constitute the rules of choice for the current paper. 
In Section~\ref{sec:J0} we define voting games for binary aggregation, specifying individual preferences by means of both a goal and a utility function, and we show how undesirable equilibria can be removed by appropriate modifications of the game matrix.
In Section~\ref{sec:J2} we present a full-blown model of collective decisions as a two-phase game, with a negotiation phase preceding voting. 
We show how the set of equilibria can be refined by means of rational negotiations.
Section~\ref{section:extendedgoals} relaxes assumptions we make on voters' goals in the basic framework showing the robustness of our results.
Section \ref{sec:related} discusses related work in some more detail. Finally, Section~\ref{sec:conclusions} concludes.


\section{Preliminaries: Binary Aggregation}\label{sec:preliminaries}


The study of binary aggregation dates back to \citep{wilson75theory}, and was recently revived both within economics \citep{DokowHolzman2010} and AI
\citep{GrandiEndrissAIJ2013}. This section is a brief introduction to the key notions and definitions of the framework.

\subsection{Basic Definitions}

In binary aggregation a finite set of agents (to which we will also refer as voters and, later on, players) $\N=\{1,\dots,n\}$ express yes/no opinions on a finite set of binary issues $\I=\{1,\dots,m\}$, and these opinions are then aggregated into a collective decision over each issue. This is analogous to voting over a simple (binary) combinatorial domain \citep{lang16voting}. Issues can be seen as queries to voters in a multiple referendum \citep{LacyNiou2000}, or seats that need to be allocated to candidates---two per seat---in assembly composition \citep{benoit10only}, or candidates of which voters approve or disapprove of \citep{brams78approval}.

%
In this paper we assume that $|\N|\geq 3$, i.e., there are at least 3 individuals.
We denote with $\D= \{B \mid B: \I \to \{0,1\}\}$ the set of all possible binary opinions over the set of issues $\I$ and call an element $B\in\D$ a {\bf ballot}. $B(j)=0$ (respectively, $B(j)=1$) indicates that the agent who submits ballot $B$ rejects (respectively, accepts) the issue $j$. A {\bf profile} $\B=(B_1,\dots,B_n)$ is the choice of a ballot for every individual in $\N$. Given a profile $\B$, we use $B_i$ to denote the ballot of individual $i$ within a profile $\B$. We adopt the usual convention writing $-i$ for $\N \backslash \set{i}$ and thus $\B_{-i}$ to denote the sequence consisting of the ballots of individuals other than $i$. Thus, $B_{i}(j)=1$ indicates that individual $i$ accepts issue $j$ in profile $\B$. Furthermore, we denote by $\N^{\B}_j=\{i \in \N \mid B_{i}(j)=1\}$ the set of individuals accepting issue $j$ in profile $\B$.




\medskip

Given a set of individuals $\N$ and issues $\I$, an {\bf aggregation rule} for $\N$ and $\I$ is a function $F:\D^\N \to \D$, mapping every profile to a binary ballot in~$\D$, called the {\bf collective} ballot. 
$F(\B)(j) \in \set{0,1}$ denotes the value of issue $j$ in the aggregation of $\B$ via aggregator~$F$.

The  benchmark aggregation rule is the so-called {\bf issue-wise strict majority rule}, which we will refer to simply as (issue-wise) majority (in symbols, $\maj$). The rule accepts an issue if and only if a majority of voters accept it, formally $\maj(\B)(j)= 1$ if and only if $|\N^{\B}_j|\geq \frac{|\N|+1}{2}$. 
Other examples of aggregation rules include {\bf quota rules}, which accept an issue if the number of voters accepting it exceeds a given quota, possibly different for each issue. Formally, if a quota rule $F_q$ is defined via a function $q : \I \to \set{0,1,\ldots,n}$, associating a quota to each issue, by stipulating $F_q(\B)_j=1 \Leftrightarrow |N^{\B}_j| \geq q(j)$. $F_q$~is called {\bf uniform} in case $q$ is a constant function. Issue-wise majority is a uniform quota rule with $q = \left\lceil \frac{\N +1}{2} \right\rceil$.
A third class of rules are distance-based rules, which output the ballot that minimises the overall distance to the profile for a suitable notion of distance.\footnote{See \citep{GrossiPigozzi2014} and \citep{endriss16judgment} for a more detailed exposition of aggregation rules.}

Table~\ref{figure:discursive_dilemma} depicts a ballot profile with three agents (rows $A, B$ and $C$) on three issues (columns $p, q$ and $r$) and the aggregated ballot (fourth row) 
obtained by majority.

\begin{table}[t]
\begin{center}
\begin{tabular}{lccc}
& $p$ & $q$ & $r$ \\
\toprule
A & 1 & 0 & 1\\
B & 1 & 1 & 0\\
C & 0 & 0 & 0\\
\midrule
\rowcolor[gray]{.9}
$\maj$ & 1 & 0 & 0\\
\end{tabular}
\end{center}
\vspace{-0.3cm}
\caption{An instance of binary aggregation}
\label{figure:discursive_dilemma}
\end{table}


\subsection{Classes of Aggregators}

The vast majority of the results we present in this paper are formulated and proved with respect to classes of aggregators, rather than specific aggregators like majority. The classes we consider are standard in the literature that focuses on binary aggregation from an axiomatic standpoint. The following are the properties we will be working with.\footnote{We refer the reader to the relevant literature mentioned in the introduction for more information about the axiomatic method in binary and judgment aggregation.}
\begin{definition}\label{def:axioms}
An aggregator $F$ is said to be:
\begin{description}
\item[{\em independent}] if for any issue $j\in \I$ and any two profiles $\B,\B'\in\D$, if $B_{i}(j) =B'_{i}(j)$ for all $i\in\N$, then $F(\B)(j)=F(\B')(j)$;
\item[{\em neutral}] if for any two issues $j,k \in \I$ and any profile $\B\in \D$, if for all $i\in\N$ we have that $B_{i}(j) = B_{i}(k)$, then $F(\B)(j) = F(\B)(k)$;
\item[{\em responsive}] if for any issue $j \in \I$ there exist two profiles $\B$ and $\B'$ such that $F(\B)(j) = 1$ and $F(\B')(j) = 0$;
\item[{\em monotonic}] if for any issue $j \in \I$, $x \in \set{0,1}$ and any two profiles $\B, \B'\in\D$, if $B_{i}(j) = x$ entails $B'_{i}(j) = x$ for all $i\in\N$, and for some $\ell \in \N$ we have that $B_{\ell}(j) = 1 - x$ and $B'_{\ell}(j) = x$, then $F(\B)(j)= x$ entails $F(\B')(j)=x$.
\item[{\em anonymous}] if for any two players $i,j\in \N$ and any two profiles $\B, \B'\in\D$, if $B_{i} = B'_{j}$, $B'_{i} = B_{j}$ and, for each $k\in \N \setminus \{i,j\}$, we have that $B_k=B'_k$, then $F(\B)=F(\B')$.
\item[{\em dictatorial}] (a dictatorship) on issue $j$, if there exists $i \in \N$ such that for any profile $\B$, $F(\B)(j) = B_i(j)$.
A {\bf {\em dictatorship}} is an aggregator that is dictatorial on every issue.

%
%
\end{description}
Furthermore, an aggregator is said to be {\bf {\em systematic}} if it is both independent and neutral.
\end{definition}
An aggregator is independent if the decision of accepting a given issue $j$ does not take into consideration the judgment of the individuals on any issue other than $j$.\footnote{When the aggregator is independent, the process of aggregation is also referred to as proposition-wise voting in the literature on multiple referrenda \citep{sanver06ensuring}, or seat-by-seat voting in the literature on assembly composition \citep{benoit10only}, or as simultaneous voting in the literature on voting over combinatorial domains \citep{lang16voting}.}
It is responsive if it is not a constant function.
It is monotonic if increasing the support for a collective ballot on one issue does not modify the result.
It is neutral if all issues are treated in the same way, and anonymous if all voters are treated in the same way.
It is systematic if it is both independent and neutral.  Finally, an aggregator is dictatorial if whenever a designated individual accepts an issue, that issue is also collectively accepted and vice versa.

\medskip

The majority rule, like any uniform quota rule, satisfies systematicity, monotonicity and anonymity. Non-uniform quota rules satisfy independence, monotonicity and anonymity, but fail neutrality (and hence systematicity). The minority rule, i.e., the rule that always selects the opposite of the majority rule, satisfies independence, anonymity, neutrality, but fails monotonicity. Dictatorships are (trivially) systematic and monotonic, but obviously not anonymous.

In this paper we will especially focus on independent, monotonic and anonymous aggregators, swapping whenever possible independence with systematicity in order to obtain stronger results.

\subsection{Winning and Veto Coalitions}

Given an aggregator $F$, we call a set of voters $C \subseteq \N$ a {\bf winning coalition} for issue $j\in \I$ if for every profile $\B$ we have that {\em if} $B_i(j)=1$ for all $i\in C$ and $B_i(j)=0$ for all $i\not\in C$ {\em then} $F(\B)(j)=1$.\footnote{The definitions from this section are all standard, except for the notion of resilient coalitions treated below. See, for instance, \citep{DokowHolzman2010} \citep{DokowHolzman2010}.} The notion of winning coalition is closely related to the independence property defined above:\footnote{This and the following facts are assumed to be well-known, for a proof in the setting of judgment aggregation we refer to Endriss \citep{endriss16judgment}, Lemma 17.1.}

\begin{fact} \label{fact:ind}
An aggregator $F$ is independent iff for all $j\in\I$ there exists a set of subsets $\Win_{j}\subseteq \mathcal P(\N)$ such that, for each ballot $\B$, $F(\B)_j=1$ if and only $N_j^\B\in\Win_j$. I.e., $F$ is independent iff it can be defined in terms of a family $\set{\Win_j}_{j \in \I}$ of sets of winning coalitions.
\end{fact}
Furthermore, we call a set of voters $C \subseteq \N$ a {\bf  veto coalition} for issue $j\in \I$ if for every profile $\B$ we have that {\em if} $B_i(j)=0$ for all $i\in C$ and $B_i(j)=1$ for all $i\not\in C$ {\em then} $F(\B)(j)=0$. Clearly,
\begin{align}
C \in \Win_j & \IFF \N \backslash C \not\in \Veto_j. \label{eq:relation}
\end{align}
Observe that by the above relation and Fact \ref{fact:ind} an independent aggregator can equivalently be defined by a set $\set{\Veto_j}_{j \in \I}$ of veto coalitions, consisting of exactly the complements of those coalitions that do not belong to any $\Win_j$. If we want to make explicit the underlying aggregator we sometimes write $\Win_F$ (respectively, $\Veto_F$) for the set of winning (resp., veto) coalitions for $F$. 
Let us fix intuitions by a few examples. The sets of winning and veto coalition for issue-wise majority are, for every issue $j$: $\Win_j = \set{C \subseteq \N \mid |C| \geq \frac{|\N|+1}{2}}$ and $\Veto_j = \set{C \subseteq \N \mid |C| \geq \frac{|\N|}{2}}$. 
When $|\N|$ is odd, this is the only quota rule for which $\Win_j=\Veto_j$ (so-called unbiasedness).
For a constant aggregator which always accepts all issues, that is, a quota rule with $q = 0$, the sets of winning and veto coalitions are, for each issue $j$: $\Win_j = 2^\N$, i.e., any coalition is winning, and $\Veto_j = \emptyset$, i.e., no coalition is a veto coalition.

\smallskip

Additional properties imposed on an independent aggregator $F$ induce further structure on winning (and veto) coalitions:

\begin{fact}
An independent aggregator $F$ is neutral (and hence systematic) iff for each $j,k \in \I$ we have that $\Win_j=\Win_k$ (equivalently, $\Veto_j=\Veto_k$ for all $j,k\in \I$).
An independent aggregator $F$ is monotonic iff for each $j \in \I$ and for any $C \in \Win_{j}$ (respectively, $C \in \Veto_{j}$), if $C \subseteq C'$ then $C' \in \Win_j$ (respectively, $C' \in \Veto_{j}$), i.e., winning (and veto) coalitions are closed under supersets. 
An independent aggregator $F$ is anonymous iff for each $j \in \I$ we have that $C \in \Win_{j}$ implies that $D \in \Win_j$ whenever $|C|=|D|$, i.e., coalitions are winning only depending on their cardinality.
\end{fact}

We conclude this preliminary section with one last important definition.
We call $C$ a {\bf resilient winning coalition} (respectively, a {\bf resilient veto coalition}) for issue $j\in \I$ if $C$ is a winning (resp., veto) coalition for $j$ and, for every $i\in C$, $C\setminus\{i\}$ is also a winning (resp., veto) coalition for $j$.\footnote{This definition adapts the notion of resiliency of equilibria studied by \citep{halpern11beyond} to the notion of winning and veto coalition proper of binary aggregation. \label{footnote}}
For every issue $j$, the set of resilient winning (resp., veto) coalitions for $j$ is denoted $\Winr_j$ (resp., $\Vetor_j$).
Given a neutral aggregator $F$, we denote with $\Win$ (resp., $\Veto$) the set of winning (resp., veto) coalitions for $F$, and with $\Winr$ (resp., $\Vetor$) the set of resilient winning (resp., veto) coalitions for $F$.
Again if we want to make the underlying aggregator explicit, we write $\Winr_F$ (resp., $\Vetor_F$) for the set of resilient winning (resp., veto) coalitions for $F$.
For example, in the case of the majority rule we have $\Winr_{\maj} = \set{C \subseteq \N \mid |C| \geq \frac{|\N|+1}{2} + 1}$, i.e., all winning coalitions exceeding the majority threshold of at least one element are resilient winning coalitions. For a uniform quota rule $F_q$, the set of resilient winning coalitions is $\Winr_{F_q} = \set{C \subseteq \N \mid |C| \geq q + 1}$. Observe also that if $F$ is dictatorial, then $\Winr_F = \emptyset$.


\section{Games for Binary Aggregation}\label{sec:J0}


In this section we present a model of a strategic interaction played by voters involved in a collective decision-making problem on binary issues. Players' strategies consist of all binary ballots on the set of issues, and the outcome of the game is obtained by aggregating the individual ballots by means of a given aggregator. 
Players' preferences are expressed in the form of a simple goal that is interpreted on the outcomes of the aggregation (i.e., the collective decision), and by an explicit payoff function for each player $i$, yielding to $i$ a real number at each profile and encoding, intuitively, the material value $i$ would receive, should that profile of votes occur. We study the existence of equilibria of these games, paying particular attention to the truthful and efficient ones.


\subsection{Main Definitions}

Before defining aggregation games we need a last piece of notation. To each set of issues $\I$, we associate the set of propositional atoms $\PS = \set{p_1,\dots,p_{|I|}}$ containing one atom for each issue in $\I$. We denote with $\L_\PS$ the propositional language constructed by closing $\PS$ under a functionally complete set of boolean connectives (e.g., $\set{\neg, \wedge}$).

\subsubsection{Aggregation Games, Goals and Preferences}

\begin{definition}\label{def:JA1}
Let $\I$ and $\N$ be given. An {\em aggregation game} (for $\I$ and $\N$) is a tuple
$\mathcal A =  \tuple{\N,\I, F, \set{\gamma_i}_{i\in\N}, \set{\pi_i}_{i\in\N}}$ 
where:
\begin{itemize}
\item $F$ is an aggregator for $\N$ and $\I$;
\item each $\gamma_i$ is a cube, i.e. a conjunction of literals from $\L_\PS$,\footnote{Formally, 
each $\gamma_i$ is equivalent to $\bigwedge_{j\in K} \ell_j$ where $K\subseteq \I$ and $\ell_j=p_j$ or $\ell_j=\neg p_j$ for all $j\in K$.} which is called a {\em goal};
\item $\pi_i: \D^\N \to \mathbb R$ is a payoff function assigning to each strategy profile a real number representing the utility that player $i$ gets at that profile.
\end{itemize}
\end{definition}
Note that a strategy profile in an aggregation game is a profile of binary ballots, and will therefore be denoted with $\prof{B}$. 
In the context of aggregation games we will use the term ``strategy profile'' and ``ballot profile'', or even just ``profile'', interchangeably. 

\medskip

Goals, intuitively, represent properties of the outcome of the aggregation process that voters are not willing to compromise about. By making the assumptions that goals are cubes we assume that each voter has a simple incentive structure: she can identify a specific set of atoms that she wants to be true at the outcome, another set of atoms that she wants to be false, and she is indifferent about the value of the others. 
When comparing two outcomes, one of which satisfies her goal and one of which does not, a voter will choose the outcome satisfying her goal. 
Thus, the first degree of preference of agents is dichotomous \citep{ElkindLackner2015}. 
If then two outcomes both satisfy her goal, or both do not, then the voter will look at the value she obtains at those outcomes through her payoff function.

This, we argue, is a very natural class of preferences for binary aggregation. They are technically known as {\bf quasi-dichotomous} preferences \citep{wooldridge13incentive} and widely studied in the context of boolean games. Henceforth we employ the satisfaction relation $\models$  (respectively, its negation $\not\models$) to express that a ballot satisfies (respectively, does not satisfy) a goal. The preference relation induced on ballot profiles by goals and payoff function is defined as follows:

\begin{definition}[Quasi-dichotomous preferences] \label{def:pref}
Let $\A$  be an aggregation game. Ballot profile $\B$ is strictly preferred by $i \in \N$ over ballot profile $\B'$ (in symbols, ${\B \succ^{\pi}_i  \B^{\prime}}$) if and only if any of the two following conditions holds:
\begin{enumerate}[i)]
\item $F(\B^{\prime}) \not\models \gamma_i$ and $F(\B) \models \gamma_i$;
\item $F(\B^{\prime}) \models \gamma_i$ iff $F(\B) \models \gamma_i$, and $\pi_i(\B) > \pi^{\prime}_i(\B)$.
\end{enumerate}
The two ballots are equally preferred whenever $F(\B^{\prime}) \models \gamma_i$ iff $F(\B) \models \gamma_i$, and $\pi_i(\B) = \pi^{\prime}_i(\B)$. The resulting weak preference order among ballot profiles is denoted $\succeq^{\pi}_i$.
\end{definition}
In other words, a profile $\B$ is weakly preferred by player $i$ to $\B^{\prime}$ if either $F(\B)$ satisfies $i$'s goal and $F(\B^{\prime})$ does not or, if both satisfy $i$'s goal or neither do, but $\B$ yields to $i$ at least as good as payoff as $\B^{\prime}$.
Individual preferences over ballot profiles are therefore induced by their goals, by their payoff functions, and by the aggregation procedure used.

\medskip

Finally, goals relate in a clear way to the structure of winning and veto coalitions of an independent aggregator. One can identify for every goal which are the sets of agents that can force the acceptance of such goal:
\begin{definition} \label{def:goalwin}
For an independent aggregator $F$ and a cube $\gamma$ we say that $C$ is {\em winning for} $\gamma$ if and only if $C \in \Win_{j}$ for each $j$ such that $\gamma \models p_j$, and $C \in \Veto_{j}$ for each $j$ such that $\gamma \models \neg p_j$.\footnote{Symbol $\models$ is here used as logical entailment between two formulas.} We write $\Win_\gamma$ for the set of winning coalitions for $\gamma$. The set of resilient winning coalitions for $\gamma$ (denoted $\Winr_\gamma$) is defined in the obvious way.
\end{definition}
In words, a coalition is winning for a goal whenever it is winning for all the issues that need to be accepted for $\gamma$ to be satisfied, and veto for all the issues that need to be rejected for $\gamma$ to be true. An obvious adaptation of the definition yields the notion of veto coalition for a given goal which, however, we will not be using in this work.

\subsubsection{Classes of Aggregation Games}

A natural class of aggregation games is that of games where the individual utility only depends on the outcome of the collective decision: 
\begin{definition}
An aggregation game $\A$ is called {\bf uniform} if for all $i \in \N$ and profiles $\B$ it is the case that $\pi_i(\B)=\pi_i(\B')$ whenever $F(\B)=F(\B')$. 
A game is called {\bf constant}, if all $\pi_i$ are constant functions, i.e., for all $i\in\N$ and all profiles $\prof{B}$ we have that $\pi_i(\prof{B})=\pi_i(\prof{B}')$. 
\end{definition}
Clearly, all constant aggregation games are uniform. Games with uniform payoffs are arguably the most natural examples of aggregation games.
The payoff each player receives is only dependent on the outcome of the vote, and not on the ballot profile that determines it. For convenience, we assume that in uniform games the payoff function is defined directly on outcomes, i.e., $\pi_i:\D\to\mathbb R$. Constant games are games where players' preferences are fully defined by their goals, and are therefore dichotomous.

\smallskip

We call a strategy $B$ $i$-{\bf truthful} if it satisfies the individual's goal $\gamma_i$. 
Note that in the case in which $\gamma_i$ is a cube that specifies in full a single binary ballot---that is, the agent's goal {\em is} a ballot---our notion of truthfulness coincides to the classic notion used in judgment aggregation and binary voting where only one ballot is truthful, and all other ballots are available for strategic voting.

Let us introduce some further terminology on strategy profiles:

\begin{definition}\label{def:profiles}
Let $C \subseteq \N$. We call a strategy profile $\prof{B}=(B_1,\dots,B_n)$: 
\begin{enumerate}
\item $C$-truthful if all $B_i$ with $i \in C$ are $i$-truthful, i.e., $B_i\models\gamma_i$, for all $i \in C$;
\item $C$-goal-efficient ($C$-efficient) if $F(\prof{B})\models \bigwedge_{i \in C} \gamma_i$;\footnote{Observe that since each $\gamma_i$ is a cube, also $ \bigwedge_{i \in C} \gamma_i$ is a cube.}
\item totally $C$-goal-inefficient (totally $C$-inefficient) if $F(\prof{B}) \models \bigwedge_{i \in C} \neg \gamma_i$.
\end{enumerate}
\end{definition}

Observe that while the notion of truthfulness is a property of the ballot itself, with goals interpreted on the individual ballots to check for truthfulness, the two notions of efficiency are instead properties of the outcome of the aggregation. 
One of the main problems that will be encountered in the following sections is indeed the existence of truthful weakly dominant profiles that are however aggregated in totally inefficient outcomes.

\medskip

One last piece of notation: let us call a game $C$-{\bf consistent}, for $C \subseteq \N$, if the conjunction of the goals of agents in coalition $C$ is consistent, i.e., if the formula $\bigwedge_{i \in C} \gamma_i$ is satisfiable. 


\subsection{Equilibria in Aggregation Games}


In this section we study Nash equilibria (NE) in special classes of aggregation games (that is, those that are constant and uniform) and their properties. Our focus is the existence of `good' NE, that is, NE that are truthful and efficient. 



\subsubsection{Absence of Equilibria}

First of all, it is important to observe that aggregation games may not have, in general, (pure stragegy) NE.

\begin{fact} \label{prop:noexist}
There are aggregation games that have no NE.
\end{fact}

\begin{proof}
Define an aggregation game as follows.
Let $\I = \set{p}$, $\N = \set{1, 2, 3}$, and let $\gamma_1 = \gamma_2 = \gamma_3 =\top$. Assume also that the payoff function is defined as follows: for any ballot profile $\prof{B}$, we have that $\pi_1(\B) = 1$ if and only if $B_1 = B_2$, and 0 otherwise; $\pi_2(\B) = 1$ if and only if $B_1 \not= B_2$, and 0 otherwise; finally, $\pi_3$ is constant.
That is, agent $1$ wants $1$ and $2$ to agree on issue $p$ while agent $2$ wants them to disagree, and agent $3$ is indifferent among any two outcomes of the interaction.\footnote{Note that this game is not uniform.} It is easy to see that the aggregation game encodes a matching-pennies type of game between $1$ and $2$ and, therefore, the resulting  aggregation game does not have a NE.
\end{proof}
We will come back to the issue of the absence of NE in Section \ref{sec:J2}. 



\subsubsection{Equilibria in Constant Aggregation Games}

Recall that a strategy $B_i$ is {\bf weakly dominant} for agent $i$ if for all profiles $\prof{B}$ we have that $(\B_{-i},B_i) \succeq^{\pi}_i  \B$.
We begin with an important result providing sufficient conditions for truthful strategies in a constant aggregation game to be weakly dominant.

\begin{proposition}\label{lemma:weak}
Let $\A$ be a constant aggregation game with $F$ independent and monotonic, and let $i\in\N$ be a player. If a strategy $B_i$ is truthful then it is weakly dominant for $i$.\footnote{This proposition can also be obtained as corollary of a result by \citep{DietrichListEP2007}. We are indebted to Ulle Endriss for this observation.}
\end{proposition}

\begin{proof}
Let $B_i$ be a truthful strategy, i.e., $B_i \models \gamma_i$.  We want to show that $B_i$ is weakly dominant, that is for every $\prof{B}' \in \D^\N$,  $F(\prof{B}') \models  \gamma_i$ implies $F(\prof{B}'_{-i},B_i) \models \gamma_i$.
We proceed towards a contradiction and assume that, for some profile $\prof{B}^{\prime}$ we have that $F(\prof{B}') \models  \gamma_i$ and $F(\prof{B}'_{-i},B_i) \not\models \gamma_i$. 
Since by Definition \ref{def:JA1} individual goals are cubes, we have that $\gamma_i=\bigwedge_{j \in \I} \ell_j$, where $\ell_j$ is a literal built from $\PS$.
Hence there exists  a $k\in \I$ such that $F(\prof{B}'_{-i},B_i)\not\models \ell_k$ but $F(\prof{B}')\models \ell_k$. 
Assume w.l.o.g. that $\ell_k$ is positive, i.e., $\ell_k=p_k$. Since $B_i$ is assumed to be truthful, $B_i\models \ell_k$ (that is, $B_i(k)=1$). 
Now, $F$ is independent so the value of issue $k$ in the output of $F$ depends only on the values of $k$ in each individual ballot in the input profile. Moreover, since $F(\prof{B}')\models \ell_k$ and $B_i\models \ell_k$, by the monotonicity of $F$ we conclude that $F(\prof{B}_{-i},B_i)\models \ell_k$. Contradiction.
\end{proof}

Intuitively the proposition tells us that independent and monotonic aggregators, as far as only the satisfaction of individual goals is concerned, guarantee that players are always better off by casting a truthful ballot.
A first immediate consequence is that computing weak dominant strategies in constant aggregation games is polynomial, since it boils down to finding a satisfying assignment to the individual goal, which in our model is a conjunction of literals. Other consequences are stated in the following corollary:

\begin{corollary}
Let $\A$ be an aggregation game with $F$ independent and monotonic:
\begin{enumerate}[(i)]
\item any profile $\prof{B}$ such that $B_i\models \gamma_i$ for all $i\in \N$ is a NE;
\item if for all $i\in \N$ the formula $\gamma_i$ is consistent, then $\A$ has at least one NE.
\end{enumerate}
\end{corollary}

\medskip

The converse of Proposition~\ref{lemma:weak} can be obtained for stronger but still natural assumptions, providing sufficient conditions for weakly dominant strategies to be truthful.

\begin{proposition} \label{lemma:weakR}
Let $\A$ be a constant aggregation game with $F$ independent, monotonic, responsive and anonymous. Then all weakly dominant strategies for $i \in \N$ are truthful.
\end{proposition}

\begin{proof}
Consider a weakly dominant strategy $B_i$ for $i$, and assume towards a contradiction that $B_i\not \models \gamma_i$, i.e., $B_i$ is not truthful.
Since goals are cubes, there exists a $k$ such that $B_i\not \models \ell_k$, where $\gamma_i=\bigwedge_{j \in \I} \ell_j$. W.l.o.g., assume that $\ell_k = p_k$ and let us argue towards the acceptance of $k$. By responsiveness there exist $\B$ and $\B'$ such that $F(\B)(k)=1$ and $F(\B')(k)=0$. By anonymity, collective acceptance and rejection of $k$ depends only on the cardinality of the set of agents accepting $k$ in a given profile. So, by monotonicity there exists an integer $q$ ($\neq 0$, by responsiveness) such that $F$ accepts $k$ if and only if the cardinality of the set of agents accepting $k$ is at least $q$.\footnote{Readers acquainted with judgment and binary aggregation will have noticed we are working here with the class of quota rules with non-trivial quota. Cf. \citep{GrossiPigozzi2014}.} Therefore there exist $\B$ and $\B'$ such that $F(\B)(k)=1$ and $F(\B')(k)=0$ and such that $\B_{-i} = \B'_{-i}$ and $B_i(k) = 1 \neq B'_i(k)$. Furthermore, exploiting independence, let us assume that for any $j \neq i$ and $t \neq k$, $B_j(t) = B'_j(t)$ and $F(\B) \models \gamma_i$. That is, there exist two profiles such that the first satisfies $i$'s goal and the second does not, because the first meets the threshold for accepting $k$ while the second does not. And the reason for the second not meeting the threshold is $i$'s ballot to reject $k$. In other words, $i$ is pivotal in $\B'$ for the acceptance of $k$ and therefore for the acceptance of its goal.
It follows that for any strategy $B_i'$ such that $B_i'\models \ell_k$ we have that $F(\B_{-i},B_i')\models\gamma_i$, i.e., $B_i$ is dominated by $B'$, against the assumption that $B_i$ is weakly dominant in $\A$.
\end{proof}
Bundling Propositions \ref{lemma:weak} and \ref{lemma:weakR} together we thus obtain conditions over the aggregator (independence, monotonicity, responsiveness and anonymity) that imply the equivalence of truthfulness and weak dominance in constant aggregation games. As noted in the proof of Proposition \ref{lemma:weakR}, these conditions on the aggregator force it to be a quota rule (with non-$0$ quota). The equivalence of truthfulness and weak dominance is therefore a feature of an important class of well-behaved aggregators.

\medskip

Before concluding this section, it should be stressed that Proposition \ref{lemma:weak} ceases to hold if we allow the goals of the voters to be propositional formulas more complex than a cube:

\begin{example}\label{example:mono}
Let $F = \maj$, $\N = \I = \set{1,2,3}$ and let $\I = \set{1,2}$. Let then $\gamma_1 = p_1 \vee p_2$ and  $\gamma_2 = \gamma_3 = \top$. That is, agent $1$ is interested in having at least one of the two issues accepted, while the rest of the agents are indifferent.  We show that in this game not all all truthful ballots of $1$ are weakly dominant. Consider the profile $\B = (B_1,B_2,B_3)$ where $1$ votes the truthful ballot $B_1 = (0,1)$, $2$ votes $B_2 = (0,0)$ and $3$ votes $B_3 = (1,0)$. We have that $F(\B) \not\models \gamma_1$. Clearly, $1$ has a best response $B'_1 = (1,0) \neq B_1$ in that profile as $F(B'_1, B_2, B_3) \models \gamma_1$.
\end{example}

\begin{example}\label{example:cubes}
Let $F = \maj$, $\N = \I = \set{1,2,3}$ and let agent~$1$'s goal be that of having an odd number of accepted issues, while agents 2 and 3 have no specific goals.
Formally, let $\gamma_1=(p_1\wedge p_2\wedge p_3) \vee (p_1\wedge \neg p_2\wedge \neg p_3) \vee (\neg p_1\wedge p_2\wedge \neg p_3) \vee (\neg p_1\wedge \neg p_2\wedge p_3) $.  As above we show that not all all truthful ballots of $1$ are weakly dominant. Consider the profile $\B = (B_1, B_2, B_3)$ where $B_2=(1,0,0)$ and $B_3=(0,1,0)$ and where 
$1$ votes a truthful ballot $B_1=(0,0,1)$. This profile results under the majority rule in $(0,0,0)$. So the (non-truthful) ballot $B_1' = (1,0,1) \neq B_1$ is a better response in $\B$ for $1$ and yields the collective ballot $(1,0,0)$, which satisfies $\gamma_1$. 
\end{example}
In fact, cubes guarantee that players' preferences in constant aggregation games satisfy a property known as separability (cf. \citep{lang16voting} for separability in combinatorial domains), which has been shown in other voting frameworks to guarantee that truthful strategies are undominated \citep{benoit10only}.\footnote{Separability is normally defined over strict orders (but cf. \citep{hodge02separable} for a general treatment of the notion). In our setup separability can be defined as follows: $\preceq_i$ is separable if and only if for all $j \in \I$, if $\B \preceq_i \B'$ for two profiles such that $F(\B)(k) = F(\B')(k)$ for all $k \neq j$, then $\B'' \preceq_i \B'''$ for all profiles such that $F(\B'')(k) = F(\B''')(k)$ for all $k \neq j$ and $F(\B)(j) = F(\B'')(j)$. If $\preceq_i$ is a dichotomous preference induced by a cube (i.e., a conjunction of literals), then $\preceq_i$ is separable for the above definition.}  
  
\subsubsection{Truthfulness and Efficiency}  
  
Proposition~\ref{lemma:weak} establishes the existence of truthful equilibria. However, as we already noticed in the introduction of this paper, truthfulness does not guarantee efficiency. 
And this does not only hold for constant aggregation games based on majority, but for games based on a large class of aggregators---known to correspond to the class of all possible uniform quota rules\footnote{See \citep{DietrichList2007}, Theorem 1.}---as the following result shows.

\begin{proposition}\label{prop:badNE}
For every aggregator which is anonymous, systematic and monotonic, there exist constant aggregation games with truthful and totally inefficient NE in weakly dominant strategies.
\end{proposition}

\begin{proof}
The proof is by construction of a constant aggregation game $\A$ with the desired property. Let $F$ be anonymous, systematic and monotonic, let $\N = \I = \set{1,2,3}$. First we will construct two games and show that at least one of the two admits a truthful and totally inefficient profile. We can then apply Proposition \ref{lemma:weak} to show that such profile ought to be a NE in weakly dominant strategies. The games are built on the same aggregator $F$ and differ only on their goals. \fbox{Game A} Let, for each $i \in \N$, $\gamma_i = p_i$. Each $\gamma_i$ is therefore, trivially, a cube. Note also that $\bigwedge_{i \in \N} \gamma_i$ is satisfiable, so Game A is $\N$-consistent. \fbox{Game B} Let, for each $i \in \N$, the goal be defined as $\chi_i = \neg p_i$. Each $\chi_i$ is therefore, again, a trivial cube and Game B is also $\N$-consistent. Now construct a ballot profile $\B$ in Game A and a ballot profile $\B'$ in Game B as follows, for $i \in \N$ and $j \in \I$:
\begin{align}
B_i(j) = \left\{
\begin{array}{ll}
1 & \mbox{if $p_j = \gamma_i$} \\
0& \mbox{otherwise}
\end{array}
\right. & \ \ \ \ 
B'_i(j) = \left\{
\begin{array}{ll}
0 & \mbox{if $\neg p_j = \chi_i$} \\
1 & \mbox{otherwise}
\end{array}
\right.
\end{align}
That is, in $\B$ each voter votes $1$ only on the issue which coincides with its goal. Vice versa, in $\B'$ each voter votes $0$ only on the issues whose rejection coincides with its goal. By construction, $\B$ and $\B'$ are both truthful. 
Now assume that $\B$ is not totally inefficient, and we prove that $\B'$ is totally inefficient. So there exists $i \in \N$ such that $F(\B) \models \gamma_i$ and since by construction $\gamma_i = p_i$, $F(\B)(i) = 1$. By anonymity and systematicity, $\B$ must actually be efficient, that is $F(\B) \models \bigwedge_{i \in \N} \gamma_i$ as by construction the individual positions on each issue are identical, modulo permutations. More precisely, by construction for all $j \neq i \in \N$, $B_j(p_i) = 0$, that is, $i$ is the only voter accepting $p_i$ in $\B$. By independence and monotonicity it follows that $\set{i} \in \Win_{p_i}$, that is, if $i$ accepts $p_i$ so does $F$. 
However, $\set{i} \not\in \Veto_{p_i}$ for otherwise $i$ would be a dictator for $p_i$, against the assumption of anonymity for $F$. From $\set{i} \not\in \Veto_{p_i}$ and Equation \eqref{eq:relation} it follows that $\N \backslash \set{i} \in \Win_{p_i}$ from which we obtain that $F(\B')(p_i) = 1$ and hence that $F(\B') \models \neg \chi_i$. By the anonymity and systematicity of $F$ it therefore follows that $F(\B') \models \bigwedge_{i \in \N} \neg \chi_i$ as by construction the individual positions on each issue are identical, modulo permutations. $\B'$ is therefore a truthful but totally inefficient ballot profile. Since $\B'$ is, by construction, a profile where each voter is truthful, 
by Proposition \ref{lemma:weak}, $\B'$ is also a NE in weakly dominant strategies. This completes the proof.
\end{proof}


\subsubsection{Equilibria in Uniform Aggregation Games}


Since constant payoffs are special cases of uniform ones, negative results such as Proposition~\ref{prop:badNE} still hold for uniform aggregation games. 
As to truthful voting, Proposition~\ref{lemma:weak} does not generalise to uniform aggregation games:

\begin{proposition}\label{prop:nodominant}
There exist uniform aggregation games for $\maj$ in which truthful strategies are not dominant.
\end{proposition}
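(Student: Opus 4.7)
The plan is to exhibit a concrete counterexample in the smallest non-trivial setting: three voters, two issues, the majority rule, and voter~1 with an \emph{incomplete} goal that leaves at least one issue free. The intuition driving the example is precisely where Proposition~\ref{prop:dominant} loses its grip: with constant payoffs, the comparison between a truthful and a non-truthful ballot reduces to a comparison of goal satisfaction, which monotonicity of $\maj$ settles in favour of the truthful option. Uniform (but non-constant) payoffs, however, can strictly distinguish among outcomes that both satisfy a voter's goal, and this extra slack can be exploited to defeat any chosen truthful strategy.

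Concretely, I would first fix voter~1's goal as the single literal $\gamma_1 = p_1$, so that both $B_1 = (1,0)$ and $B_1' = (1,1)$ qualify as 1-truthful. Next, I would pick a uniform payoff function $\pi_1$ on outcomes with $\pi_1(1,1) > \pi_1(1,0)$; the goals and payoffs of voters~2 and~3 play no role and can be set arbitrarily. Finally, I would choose the opponents' ballots $B_2=(1,1)$ and $B_3=(0,0)$, so that voter~1 is not pivotal on issue~1 but is fully pivotal on issue~2.

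The verification is then a one-line computation on the majority rule: $\maj((1,0),(1,1),(0,0)) = (1,0)$ while $\maj((1,1),(1,1),(0,0)) = (1,1)$. Both outcomes satisfy $\gamma_1$, so $\succeq^\pi_1$ is decided by payoffs, and $B'_1$ strictly beats $B_1$ in this profile. Hence the truthful ballot $B_1$ is not weakly dominant, which is all the proposition asks for. I do not foresee any genuine obstacle; the only point to keep straight is that ``$i$-truthful'' here means goal-satisfying rather than outcome-determining, so an incomplete goal admits several distinct truthful ballots --- exactly the room the uniform (non-constant) payoff exploits, and exactly the feature that collapses in the constant case treated by Proposition~\ref{prop:dominant}.
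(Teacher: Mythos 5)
Your example is correct for the literal statement: with $\gamma_1=p_1$, opponents $(1,1)$ and $(0,0)$, and a uniform $\pi_1$ with $\pi_1(1,1)>\pi_1(1,0)$, the truthful ballot $(1,0)$ yields outcome $(1,0)$ while $(1,1)$ yields $(1,1)$; both satisfy the goal, so the payoff comparison shows $(1,0)$ is not weakly dominant, which is all Proposition~\ref{prop:nodominant} asks. However, your route differs from the paper's and is weaker in spirit. You exploit an \emph{incomplete} goal, so there are several truthful ballots and the failure of dominance you exhibit is one truthful ballot being beaten by \emph{another truthful} ballot; depending on how you fill in the unspecified payoffs, the ballot $(1,1)$ may itself be a weakly dominant truthful strategy, so your game need not display any incentive to vote against one's goal. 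The paper instead uses complete cubes (``single-model'' goals), where each voter has a unique truthful ballot: voter $3$ cannot attain his goal at the relevant opponent profile either way, so the uniform payoff decides, and the unique truthful ballot is strictly beaten by a \emph{non-truthful} one. That stronger counterexample is what underwrites the paper's subsequent discussion that truthful voting itself can fail to be dominant even with single-model goals, whereas your construction, though a valid witness for the proposition as stated, shows only that payoffs can discriminate among goal-equivalent truthful ballots.
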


\begin{proof}
Define the uniform aggregation game as follows.
Let  $\I = \{p,q,t\}$, $\N = \{1,2,3\}$ and $F = \maj$. Let $\gamma_1=\neg p \wedge q \wedge \neg t$, $\gamma_2= \neg p \wedge \neg q \wedge \neg t$, and  $\gamma_3=\neg p \wedge \neg q \wedge t$.
Define the payoff functions as follows, let $\pi_i(B)=1$ for $i=3$ and $B=(0,1,0)$, and 0 otherwise. 
Take the following profiles: 
$\prof{B}_1=((0,1,0), (0,0,0), (0,0,1))$ and $\prof{B}_2=((0,1,0), (0,0,0), (0,1,0))$. 
Since $\maj(\prof{B}_1)=(0,0,0)$ and $\maj(\prof{B}_2)=(0,1,0)$, we have $\prof{B}_2 \succ^{\pi}_3 \prof{B}_1$ and  $\prof{B}_1$, unlike $\prof{B}_2$, contains a truthful strategy by $3$. 
\end{proof}
The fact that truthful voting is not always a dominant strategy for aggregation games with simple goals might seem counterintuitive, especially when the payoff is required to be uniform across profiles leading to the same outcome. The reason for this lies in the effect of the payoff function.
When a player is in the position of changing the outcome of the decision in a certain profile, this does not necessarily imply she has the power to make the collective decision satisfy her goal. She may only be able to lead the group to a decision which, even though still not satisfying her goal, yields a better payoff for her.

\medskip

Despite the negative result in Proposition~\ref{prop:nodominant}, we can still prove the existence of truthful and efficient equilibria in a uniform aggregation game if we assume the mutual consistency of the individual goals of a resilient winning coalition.

\begin{proposition}\label{prop:NE-existence}
Every $C$-consistent uniform aggregation game for $F$ independent and monotonic has a NE that is $C$-truthful and $C$-efficient, if $C\in \Winr_{\bigwedge\Gamma}$ where $\Gamma = \set{\gamma_i \mid i\in C}$.
\end{proposition}
\begin{proof}
Take a $C$-consistent uniform aggregation game. Notice that since each goal $\gamma_i$ is a cube, also $\bigwedge \Gamma$ is a cube, so that Definition \ref{def:goalwin} applies.
There exists then a ballot $B^*$ such that $B^*\models \bigwedge \Gamma$. 
Take now any ballot profile $\prof{B}^*$ such that $B^*$ is the ballot of all and only the voters in $C$ while all agents in $\N \backslash C$ vote the inverse ballot  $\overline{B^*}$ (that is, for any issue $j$, $B^*(j) = 1$ iff $\overline{B}^*(j) = 0$). Since $C\in \Win_{\bigwedge \Gamma}$ (by the assumption that $C\in \Winr_{\bigwedge \Gamma}$) we have that $F(\prof{B}^*)=B^*$. Clearly $F(\prof{B}^*)$ satisfies the conjunction of the goals of the individuals in $C$, and each individual in $C$ votes truthfully. We show that $\prof{B}^*$ is a NE, by showing that (a) no agent in $\N \backslash C$ has a profitable deviation, and (b) no agent in $C$ has a profitable deviation. As to (a), since $F$ is monotonic, any change in the ballot $\overline{B}^*(j)$ by some voter in $\N \backslash C$ does not change the outcome $F(\prof{B}^*)=B^*$. As to (b), any change in the ballot $B^*$ by some voter in $C$ does not change the outcome because $C\in \Winr_{\bigwedge \Gamma}$. 
\end{proof}


Let us recapitulate the findings of this section. We have shown, for a well-behaved class of aggregators, that aggregation games with constant payoffs have many NE, since the truthful ballots are exactly the weakly dominant strategies of the game (Propositions \ref{lemma:weak} and \ref{lemma:weakR}). This result does not carry over to uniform aggregation games, but we showed that with some stronger assumptions on the aggregator, the existence of truthful \emph{and} efficient equilibria in such games can be guaranteed each time a resilient winning coalition has non-conflicting goals (Proposition \ref{prop:NE-existence}). Proposition \ref{prop:badNE} has then highlighted a key issue of aggregation games: truthful equilibria may be totally inefficient in the sense of failing to satisfy the goals of all voters, even when all such goals are consistent. The purpose of the following section is to introduce an endogenous pre-play negotiation mechanism which, in equilibrium, allows players to select truthful and efficient NE of the underlying aggregation game, thus resolving the tension between truthfulness and efficiency.


\section{Pre-vote Negotiations}\label{sec:J2}


The section presents endogenous aggregation games, our model aggregation games augmented with a pre-vote negotiation phase. In a nutshell voters will now be allowed, before the vote takes place, to sacrifice a part of their expected gains in order to influence the other voters' decision-making. We show that allowing such negotiations: (i) guarantees the selection of efficient equilibria for all individuals when one such equilibrium exists (ii) discards all equilibria that are inefficient for the voters of any winning coalition.

\subsection{Endogenous Aggregation Games}

Endogenous aggregation games consist of two phases:
\begin{itemize}
\item A {\em pre-vote phase}, where, starting from a uniform aggregation game, players make simultaneous transfers of utility that may modify each others' payoffs in the game;
\item A {\em vote phase}, where players play the aggregation game resulting from the original game after payoffs are updated according to the tranfers made in the pre-vote phase.
\end{itemize} 
As usual, it is assumed that the players have common knowledge of the structure of the game (including their goals and payoffs). A key assumption is furthermore that the transfers made in the pre-vote phase be binding, for instance through a central authority.\footnote{The existence of such authority is a common assumption in work on election control and bribing \citep{faliszewski16control},  as well as persuasion \citep{HazonEtAlIJCAI2013}. It is often referred to as `chair' or `election organiser'.}  The authority would, besides running the election in the vote phase, also collect and enforce the transfers announced in the pre-vote phase. 

\medskip

Pre-vote strategies are modelled as {\bf transfer functions} of the form:
\[
\tau_i: \D^\N \times \N \to \mathbb{R}_+
\] 
where $i \in \N$.
These functions encode the amount of payoff that player $i$ commits to give to player $j$ should a given ballot profile $\B$ be played, in symbols, $\tau_i (\B,j)$. The set of all transfer functions is denoted by $\T$, and a {\bf transfer profile} is a tuple of transfer functions $\tau \in \T^{|\N|}$. We denote with $\tau^0$ the `void' transfer where at every profile every player gives $0$ to the other players.

The aggregation game induced by the transfer profile $\tau$ from $\A$ is denoted $\tau(\A) = \tuple{\N,\I, F, \set{\gamma_i}_{i\in\N}, \set{\tau(\pi)_i}_{i\in\N}}$ where, for any $i \in \N$:
\begin{equation}
\tau(\pi)_i(\B) = \pi_i(\B) + \sum_{j\in \N} \left( \tau_j (\B,i) - \tau_i (\B,j)\right) \label{eq:payoff}
\end{equation}
So the payoff of player $i$ at profile $\B$ {\em once $\tau$ is played}, consists of the old payoff that $i$ was receiving at $\B$, {\em plus} the money that $i$ receives from the other players at $\B$, {\em minus} what $i$ gives to them at $\B$. Notice that transfers do not preserve the uniformity of payoffs: even though $\A$ is always assumed to be uniform, $\tau(\A)$ is not necessarily so,\footnote{In fact it is easy to see that there always exists a transfer that turns a uniform aggregation game into a non uniform one.} and may lack a NE (recall Proposition \ref{prop:noexist}).

\medskip

Endogenous aggregation games are defined formally as follows:
\begin{definition}[Endogenous aggregation games] \label{def:JA2}
An endogenous aggregation game is a tuple
$\EA = \tuple{\J,\{T_i\}_{i\in N}}$ where $\J$ is a uniform aggregation game, and for each $i \in \N$, $T_i= \T$. A {\em constant} endogenous aggregation game is an endogenous aggregation game $\EA$ where $\A$ is assumed to be constant.\footnote{Constant endogenous aggregation games will be discussed in Section \ref{section:extendedgoals}.}
\end{definition} 
In an endogenous aggregation game, each player $i$ selects first a transfer $\tau_i$ (pre-vote phase). The resulting transfer profile $\tau$ yields the new aggregation game $\tau(\A)$. Then each player selects a ballot $B_i$ (vote phase) in $\tau(\A)$. The resulting ballot profile $\B$ yields the collective ballot $F(\B)$, where $F$ is the aggregator of $\A$.

\subsection{Equilibrium Selection via Pre-vote Negotiations}

In this section we first provide the definition of the solution concept we use for analysing endogenous games as two-stage extensive form games, and we then present our main results on the selection of efficient equilibria via pre-vote negotiations.

\subsubsection{Solving Endogenous Aggregation Games}

Endogenous aggregation games are (perfect information) extensive form games with two stages of simultaneous choices. 
So in an endogenous aggregation game, a strategy of player $i$ is a function $\sigma_i$ that selects a $\tau_i$ in the pre-vote phase, and a ballot $B_i$ in every aggregation game reachable in the voting phase of the endogenous game. A profile $\sigma$ of such strategies selects a unique transfer profile and a ballot profile---{\bf transfer-ballot} profile in short---$(\tau, \B)$ that will be played if that strategy profile is chosen by the players.
It is fundamental to notice how each strategy profile determines not only what will be played after a given transfer profile, i.e., the choices {\em on the equilibrium path}, but also what would be played after all other possible transfer profiles, i.e., the counterfactual choices {\em off the equilibrium path}.
Given the above, it would seem natural to resort to (pure strategy) subgame perfect Nash equilibrium (SPE) to solve endogenous aggregation games. 
There is a complication however. As we observed earlier, the aggregation game resulting from a transfer profile may not be uniform and, therefore, may not necessarily have a NE. This makes SPE inapplicable as some subgames of the initial extensive game may, therefore, be unsolvable.
Also notice how resorting to mixed strategy equilibria would not be a fix, as lexicographic preferences are well-known not to be representable in terms of a utility functions, therefore not enabling Nash's NE existence theorem \citep{ruby}.

What we do instead is to analyse endogenous aggregation games by assuming that, if the game resulting from a transfer profile does not have a NE, players play a maxmin strategy and resort to their maxmin value, or {\em security level},\footnote{See \citep{shoham08multiagent} \citep{shoham08multiagent} for an extensive exposition of the notion.} to evaluate the resulting aggregation game. \footnote{A game with no NE should be seen as an unstable game, i.e., a situation in which the players do not have any reason to believe that some specific outcome will be realised. Measures such as the security level therefore compensate for this uncertainty. We could have adopted a number of alternative solutions, e.g., taking the value of the minimum outcome for each player, or considering such games never to be profitable deviations. Ultimately, all these assumption have the effect of ruling out profitable deviations to games with no NE.} 
In our quasi-dichotomous setup a maxmin strategy (ballot) is defined as:
\begin{align}
\mathsf{argmax}^{\succeq^{\pi}_i}_{B_i} \mathsf{min}^{\succeq^{\pi}_i}_{\B_{-i}} \set{F(\B) \mid \B = (B_i,\B_{-i})},
\end{align}
i.e., the ballot that maximises the minimum, with respect to $\succeq^{\pi}_i$ (Definition \ref{def:pref}), outcome for $i$. If the maxmin strategy $B_i$ guarantees that $\gamma_i$ is satisfied no matter what the other players do, then we say that $B_i$ is {\bf safe for} $\gamma_i$. The minimum payoff guaranteed by a $i$'s maxmin strategy is $i$'s {\bf security level}. We can now define the solution for an endogenous aggregation game. 

\begin{definition}[Solutions] \label{def:sol}
Let $\EA$ be an endogenous aggregation game. A strategy profile $\sigma$ of $\EA$ is a {\em solution}  of $\EA$ if and only if:
\begin{enumerate}
\item {\em Voting phase.} At each $\tau(\A)$ with $\tau \in \T$ 
, $\sigma$ induces a NE of $\tau(\A)$, if such equilibrium exists, or a profile of maxmin strategies of $\tau(\A)$, otherwise.

\item {\em Negotiation phase.} $\sigma$ selects a unique transfer profile $\tau$, and this profile
is a NE of the game $\tuple{\N, \set{T_i}_{i \in \N}, \set{\succeq^{\pi}_i}_{i \in \N}}$ where $T_i = \T$ and each $\succeq^\pi_i \subseteq \T^2$ is the quasi-dichotomous preference\footnote{Recall Definition \ref{def:pref}.}over transfer profiles where $\tau \succeq^\pi_i \tau'$ if {\em either} the outcome of $\tau(\A)$ satisfies 
$i$'s goal and the outcome of $\tau'(\A)$ does not, {\em or} $i$'s payoff is greater at the outcome of $\tau(\A)$ than at the outcome of $\tau'(\A)$.\footnote{Formally:
a profile $\tau$ with associated ballot $\B$ by strategy profile $\sigma$ satisfies goal $\gamma_i$ ($\tau \models \gamma_i$) if {\em either} $\B$ is a NE of $\tau(\A)$ and $\B \models \gamma_i$, {\em or} $B_i$ is safe for $\gamma_i$ in $\tau(\A)$, where $\B$ is the ballot profile induced by $\sigma$, and $B_i$ is $i$'s maxmin strategy. The payoff $\pi_i$ of profile $\tau$ is {\em either} $\tau(\pi)_i(\B)$ if $\B$ is a NE, {\em or} $i$'s security level in $\tau(\A)$ otherwise, where $\B$ is the ballot profile induced by $\sigma$.  
}
\end{enumerate}
A transfer-ballot profile $(\tau,\B)$ is called a {\bf solution outcome} of $\EA$ iff it is induced by some solution of $\EA$.
\end{definition}
Intuitively, a solution outcome is obtained as follows: 
\begin{itemize}

\item {\em Voting phase} In the aggregation game resulting after each transfer profile, a NE is selected where at least one such equilibrium exists, or a maxmin profile is selected otherwise; the value of such NE, if it exists, or the players' security levels are used to evaluate transfer profiles in the next phase;
\item {\em Negotiation phase}
A transfer profile is selected, such that no profitable deviation exists to another transfer profile, for any player, given the selected continuations.

\end{itemize}

The definition of solutions is an adaptation to endogenous aggregation games of the classical backwards induction procedure used to calculate subgame perfect equilibria in extensive games, and therefore gives a procedural recipe to compute them.


\subsubsection{Surviving Equilibria}

This section provides existence results for solutions of endogenous games. We actually provide stronger results, showing not only that solutions exist, but also (in the next sections) necessary and sufficient conditions for them to enjoy desirable properties in terms of the `quality' of the equilibria that negotiation gives rise to. In other words, we study endogenous aggregation games as endogenous mechanisms for the refinement of the equilibria of their underlying aggregation games. 
\begin{definition}
Let $\A$ be a uniform aggregation game, and $\prof{B}$ a NE of $\A$. $\prof{B}$ is a {\em surviving Nash equilibrium} (SNE) of $\A$ if there exists a transfer profile $\tau$ such that $(\tau, \B)$ is a solution outcome of the endogenous aggregation game $\EA$.
\end{definition}
Intuitively, surviving SNE identify those voting outcomes that can be rationally sustained by an appropriate pre-vote negotiation. Clearly, not all NE of the initial game will be surviving equilibria but, crucially, it can be shown that NE with good properties are of this kind, as we set out to show in this section. Let us first establish the following lemma.
\begin{lemma}\label{lemma:tau}
Let $\A$ be a uniform aggregation game for an independent and monotonic aggregator $F$. 
Then, for every $\N$-efficient and $\N$-truthful ballot profile $\prof{B}$ of $\J$, there exists a transfer profile $\tau$ such that $\prof{B}$ is a weakly dominant strategy equilibrium in $\tau(\A)$.
\end{lemma}
\begin{proof}
Let $\prof{B}$ be a $\N$-efficient and $\N$-truthful ballot profile of $\J$. 
We construct a transfer profile $\tau$ such that for any $i \in \N$ the strategy $B_i$ is a strictly dominant strategy for $i$ in $\tau(\A)$, that is: for any profile $\prof{B}'$, $(B_i,\prof{B}'_{-i}) \succeq^\pi_i\prof{B}'$ and for some $\prof{B}'$, $(B_i,\prof{B}'_{-1}) \succ^\pi_i\prof{B}'$. The construction goes as follows.\footnote{The construction is based on a similar construction developed in \cite[Th. 4]{JW05}.}
Consider now the quantity:
\begin{equation}
M = 1 + \max \left(\set{z \mid \exists \prof{B},\prof{B}'  \mathit{and}~  i\in \N \mathit{s.t.}~  z = \pi_i(\prof{B}) - \pi_i(\prof{B}')}\right) \label{eq:M}
\end{equation}
that is, $M$ exceeds by one unit the maximal difference between the payoff received at any two outcomes by any agent.
We are now ready to define a transfer function. For all $i,j \in \N$: 
\begin{equation}
\tau_i(\prof{B}',j) = \left\{
\begin{array}{ll}
2M & \mathit{if}~  B'_{i} \neq B_i \\
0 & \mathit{otherwise}. 
\end{array}
\right.
\label{formula:tau}
\end{equation}
In words, each player $i$ commits to pay each other player the sum $2M$ in case he deviates from the ballot $B_i$. 
By the definition of the preference relation $\succeq^\pi_i$ (Definition \ref{def:pref}), in order to prove the claim we need to show that: 
\fbox{A} For any $\prof{B}'$ if $F(\prof{B}')\models \gamma_i$ then $F(B_i,\prof{B}'_{-i}) \models \gamma_i$, that is, under no circumstance $i$'s goal would become falsified by playing $B_i$.
\fbox{B} For any $\prof{B}'$ if $F(\prof{B}')\models \gamma_i$ iff $F(B_i,\prof{B}'_{-i}) \models \gamma_i$ (that is both profiles satisfy $i$'s goals), then $\tau(\pi)_i(B_i,\prof{B}'_{-i}) \geq \tau(\pi)_i(\prof{B}')$, and for some $\prof{B}'$, $\tau(\pi)_i(B_i,\prof{B}'_{-i}) > \tau(\pi)_i(\prof{B}')$. 

\fbox{Claim A} 
Assume that $F(\prof{B}')\models \gamma_i$. There are two cases. First, $B_i' \models \gamma_i$. Since $\B$ is $\N$-truthful we know that $B_i \models \gamma_i$. Now $\gamma_i$ is assumed to be a cube (Definition \ref{def:JA1}) so $B'_i$ and $B_i$ can differ only on the evaluation of issues on which the truth of $\gamma_i$ does not depend. By the independence of $F$ we conclude that $F(B_i,\prof{B}'_{-i}) \models \gamma_i$. Second, $\B' \not\models \gamma_i$. But, as above, $B_i \models \gamma_i$. By the assumption $F(\prof{B}')\models \gamma_i$, so in profile $(B_i, \B'_{-i})$ voter $i$ is changing the evaluation of at least one of the atoms that was falsifying $\gamma_i$ in $B'_i$ (recall that $\gamma_i$ is a cube) to the evaluation given by $F(\B')$. By the monotonicity of $F$ we can therefore conclude that $F(B_i,\prof{B}'_{-i}) \models \gamma_i$.

\fbox{Claim B} Given the definition of payoffs in endogenous games \eqref{eq:payoff}, the claim follows directly from the construction of $\tau$. This completes the proof.
\end{proof}
Intuitively, the lemma establishes that whenever a truthful profile exists that satisfies all players' goals, that profile can be turned into a dominant strategy equilibrium by means of a suitable combination of pre-vote transfers which, in essence, make the players' commitment to that equilibrium credible. 


\subsubsection{Good Equilibria Survive}

Lemma~\ref{lemma:tau}  establishes the existence of a suitable transfer profile sustaining `good'---that is, truthful and efficient---equilibria. 
We now show that there exist solutions of the endogenous aggregation game which require precisely the transfer profile from Lemma~\ref{lemma:tau}  to be played in the negotiation phase of the game, and which therefore determine the voters to play `good' equilibria in the voting phase.
\begin{theorem}\label{theorem:survival} 
Let $\A$ be 
a uniform aggregation game for an independent and monotonic aggregator $F$. 
Every $\N$-efficient and $\N$-truthful NE of $\J$ is a SNE.
\end{theorem}
\begin{proof}
Let $\prof{B}$ be a $\N$-efficient and $\N$-truthful NE of $\J$. By the assumptions on the aggregator and Lemma \ref{lemma:tau}, we can construct a transfer profile $\tau$ according to Formula \eqref{formula:tau} such that $\prof{B}$ is a weakly dominant strategy equilibrium of $\tau(\A)$. 
We have to show that  $(\tau, \prof{B})$ is a solution outcome of $\EA$. Suppose towards a contradiction that this is not the case, that is, that there exists a profitable deviation by player $i$ from the strategy profile that induces the transfer-ballot profile $(\tau, \prof{B})$. Since $\B$ is a dominant strategy equilibrium in $\tau(\A)$ such deviation has to involve a different transfer $\tau^*_{i}$ by player $i$ in the pre-vote phase. We identify two cases:

\fbox{Case 1} The deviation induces a new transfer profile $\tau'=(\tau^*_{i},\tau_{-i})$ such that $\tau'(\A)$ has no NE.
By Definition \ref{def:sol} a deviation to $\tau^*_{i}$ would yield $i$ his security level in $\tau'(\A)$. We show that $\tau \succeq^\pi_i \tau'$, and hence the deviation cannot be profitable for $i$. Since $\B \models \gamma_i$ we have, by Definition \ref{def:sol}, that if $\tau' \models \gamma_i$
then $\tau \models \gamma_i$, that is, it cannot be the case that $\tau'$ leads $i$ to satisfy its goal while $\tau$ does not. As to the payoffs, observe that $\pi_i(\tau)$ is the payoff yielded by the NE $\B$ of $\tau(\A)$, and that given the construction of $\tau'$ no agent transfers any utility to $i$ in $\tau'$: for all $j \in \N \setminus \{i\}$, $\tau'_{j}(\prof{B}',i)=0$ whenever $B'_{j} = B_j$. It follows that for every ballot profile $\B'$ in $\tau'(\A)$ $i$'s payoff is at most as much as what it obtains from $\B$ in $\tau(\A)$. Hence, $\tau \succeq^\pi_i \tau'$.

\fbox{Case 2} The deviation induces a new transfer profile $\tau'=(\tau^*_{i},\tau_{-i})$ such that the game $\tau'(\A)$ has a NE. We show that $\tau^*_{i}$ cannot be a profitable deviation for $i$.
Since the strategy profile supporting $(\tau, \B)$ is assumed to be a solution, $i$'s deviation determines a transfer-ballot profile $(\tau', \B')$ where $\B'$ may or may not be a NE. However, if $i$'s deviation toward $(\tau', \B')$ where $\B'$ is not a NE is profitable, $i$ must have another at least as profitable deviation determining the transfer-ballot profile $(\tau', (B^*_i, \B'))$, where $(B^*_i, \B')$ is a NE of $\tau'(\A)$. We therefore assume w.l.o.g., that the transfer-ballot profile $(\tau', \B')$ determined by $i$'s deviation is such that $\B'$ is a NE of $\tau(\A)$.  Observe now that in order for $(\tau', \prof{B}')$ to be profitable for $i$ there must exist a $j \neq i$ such that $B'_j \neq B$, that is, there exists at least another agent $j$ besides $i$ who deviates from $\B$ to $\prof{B}'$. This is because $\prof{B}$ is an $\N$ efficient NE and    by construction of $\tau'$,we have that $\tau'_{j}(\prof{B}',i)=0$ whenever $B'_{j} = B$. So let there be $k \geq 1$ players  $j \neq i$ for which $B'_{j} \neq B$ and consider some such $j$. We have two subcases:
 
 \fbox{Case 2a} $\prof{B}'\not\models \gamma_j$. Note that since $\prof{B}'$ is a NE, we have that no $B''_j$ is such that $F({B}''_{j},\prof{B}'_{-j}) \models \gamma_j$, so $B'_j$ is a best response only by virtue of $j$'s payoff. By the construction of $\tau'$, playing $B'_j$ gives $j$ the following payoff: 
\[
\tau'(\pi)_j(\B') = \pi_j(\B')- (|\N|-1)2M + 2M(k-1)+\tau'_{i}(\B',j). \label{eq:tau1}
\]
If $j$ plays $B_j$ instead, then $j$'s payoff is: 
\[
\tau'(\pi)_j(B_j, \B'_{-j}) = \pi_j(B_j,\prof{B}'_{-j}) + 2M(k-1)+\tau'_{i}(B_j,\B'_{-j},j). \label{eq:tau2}
\]

Now since $\prof{B}'$ is a NE by assumption and $j$ does not have a better response that can satisfy her goal, we have that $\tau'(\pi)_j(\B') \geq \tau'(\pi)_j(B_j, \B'_{-j})$ and therefore: 
\[
\tau'_{i}(\prof{B}',j) - \tau'_{i}(B_j,\prof{B}'_{-j},j) \geq \pi_j(B_j,\prof{B}'_{-j}) - \pi_j(\prof{B}') + (|\N|-1)2M.
\]
Now we use this inequality to compare $i$'s payoff in $(\tau, \prof{B})$ vs. $(\tau', \prof{B}')$.
Given the definition of $M$ in Formula \eqref{eq:M} and given the fact that $|\N|-1\geq 2$ it follows that $\tau'_{i}(\prof{B}',j) - \tau'_{i}(B_j,\prof{B}'_{-j},j) \geq 3M$, which
in turn implies that $\tau'_{i}(\prof{B}',j) \geq 3M$. 
Therefore $i$'s payoff $\tau'(\pi)_i(\prof{B}')$ in the new NE $\prof{B}'$ is at most:
\[
\pi_i(\prof{B}') - k3M + k2M.
\]
In contrast, by the construction of $\tau$ in Formula \eqref{formula:tau}, $\tau(\pi)_i(\prof{B}) = \pi_i(\prof{B})$.
So from the fact that $k \geq 1$ it follows that $\pi_{i}(\prof{B}') - k3M + k2M \leq \pi_{i}(\prof{B})$ and therefore that $\tau'(\pi)_i(\prof{B}') \leq \tau(\pi)_i(\prof{B})$. Since $\prof{B}$ is $\N$-efficient, by Definition \ref{def:sol}, $\tau \succeq^\pi_i \tau'$ and the constructed deviation $\tau^*_i$ cannot be profitable.

 \fbox{Case 2b} $F(\prof{B}') \models \gamma_j$. Notice that from this it follows that $F(B_j,\prof{B}'_{-j}) \models \gamma_j$, for otherwise there would be a ballot profile (i.e., $(B_j,\prof{B}'_{-j})$) where $B'_j$ is a better response for $j$ making $\gamma_j$ true. From this we would conclude that $\B$ in $\tau(\A)$ would not be a dominant strategy equilibrium, against our assumption.
We can then proceed as in Case 2a, showing that  $\tau'(\pi)_i(\prof{B}') \leq \tau(\pi)_i(\prof{B})$. Since, as just noticed, both $\B$ and $\B'$ satisfy $\gamma_j$, by Definition \ref{def:sol}, $\tau \succeq^\pi_i \tau'$ and the constructed deviation $\tau^*_i$ cannot be profitable. This completes the proof.
\end{proof}

\noindent
By combining~Theorem~\ref{theorem:survival} with Proposition~\ref{prop:NE-existence}, we also get the following corollary:

\begin{corollary}
Let $F$ be an independent and monotonic aggregator. Every $\N$-consistent uniform aggregation game for $F$, where $\N \in \Winr_{\bigwedge \Gamma}$ with $\Gamma = \set{\gamma_i \mid i\in \N}$,  has a SNE that is $\N$-truthful and $\N$-efficient.
\end{corollary}


\subsubsection{Only Good Equilibria Survive}

We now turn to necessary conditions for equilibria to survive, and observe that for an equilibrium to survive, it has to be efficient. This can be regarded as a converse statement of Theorem~\ref{theorem:survival}, and it holds true in a general form concerning winning coalitions (and not just the coalition $\N$) with internally consistent goals:

\begin{theorem}\label{prop:surviving}
Let $\A$ be a $C$-consistent uniform aggregation game for an independent and monotonic aggregator $F$, and such that $C \in \Win_{\bigwedge \Gamma}$ where $\Gamma = \set{\gamma_i \mid i\in C}$.  Then, every SNE of $\A$ is $C$-efficient. 
\end{theorem}

\begin{proof}[Proof]
We proceed by contraposition.
Let $\prof{B}$ be a NE that is not $C$-efficient, i.e., such that $F(\prof{B})\not \models \gamma_i$ for some individual $i \in C$, and assume towards a contradiction that $\prof{B}$ is a SNE.
Therefore there exists a transfer profile $\tau$ such that  $(\tau, \prof{B})$ is a solution outcome of the endogenous game $\EA$. We proceed towards a contradiction and construct a profitable deviation for a player $i$ from $\tau$, that is a $\tau'$ such that $\tau' \succ^\pi_i \tau$.
By $C$-consistency of $\J$ there exists a ballot $B'$ such that $B'\models \bigwedge \Gamma$, hence in particular $B'\models \gamma_i$. Let now $i$ deviate to any transfer profile $\tau'=(\tau'_i, \tau_{-i} )$ such that she offers to all other players in $C$ more than their payoff difference if they switch to vote for ballot $B'$ while everybody else in $C$ does so, i.e., 
\[
\tau'_i((\B'_C, \B''_{-C}),j)  >  \tau(\pi)_j(\B'_{C-\set{j}}, B''_j, \B''_{-C}) - \tau(\pi)_j(\B'_C, \B''_{-C})
\]
for each $j\in C$, each $B''_j$, and each $\B''_{-C}$, and where $\B'_C = (B'_j)_{j \in C}$ . 
By the fact that $B'$ is $C$-efficient, $F$ independent and monotonic, and $C$ is a winning coalition for $\bigwedge \Gamma$, this transfer makes each $B'_j$, with $j \in C$ a best response in profiles $(\B'_C, \B''_{-C})$, for any $\B''_{-C}$. It follows that each $(\B'_C, \B''_{-C})$ is a NE of $\tau'(\A)$, which satisfies $\bigwedge \Gamma$, while $\B$ in $\tau(\A)$ does not. We can therefore conclude (by Definition \ref{def:sol}) that $\tau' \succ^\pi_i \tau$, completing the proof.
\end{proof}
The proof of Theorem~\ref{theorem:survival} provides every agent with a simple algorithm to compute a negotiation strategy that will guarantee the emergence of an efficient equilibrium in the resulting game. Note that the use of cubes sidesteps the intractability of the satisfiability problem for the conjunction of the goals.


Observe also that Theorem~\ref{prop:surviving} implies the existence of uniform aggregation games where no equilibrium is surviving, which in turn implies that Theorem \ref{theorem:survival} cannot be weakened from $\N$-efficiency to $C$-efficiency. This is the case when distinct but overlapping coalitions have incompatible goals, as the following example shows:

\begin{example}\label{ex:coalitions}
Let there be five players in $\N$, and let $F$ be the majority rule. Let $\gamma_1=p \wedge \neg r$, $\gamma_2=\gamma_3=\gamma_4=\top$ and let $\gamma_5=r \wedge \neg p$. Both coalitions $C_1=\{1,2,3,4\}$ and $C_2=\{2,3,4,5\}$ are resilient winning coalitions, and the game is both $C_1$-consistent and $C_2$-consistent. Hence, by Theorem~\ref{prop:surviving} any surviving equilibria must be both $C_1$-efficient and $C_2$ efficient, which is impossible given that the conjunction of the goals of the two coalitions is inconsistent.
\end{example}


\subsubsection{Summary}

Results such as Theorems~\ref{theorem:survival} and \ref{prop:surviving} suggest that pre-vote negotiations are a powerful tool players have to overcome the inefficiencies of aggregation rules. More specifically, when the goals of all players can be satisfied at the same time, pre-vote negotiations allow players to engineer side-payments---essentially as devices for credible commitments---leading to equilibrium outcomes that satisfy them, and ruling out all the others. 
We stress that in solving endogenous aggregation games, even when the game ends up sustaining efficient outcomes, players' strategies are individually rational and the game remains non-cooperative throughout. This differentiates our work from approaches to equilibrium selection with coalitional games, such as the one developed by \citep{bachrach11coalitional}.
%



\section{Larger Classes of Goals}\label{section:extendedgoals}

In this section we explore two variants of our results where we relax the assumption that goals need to be cubes. The section shows that our results about equilibrium refinement via endogenous games are robust across different natural settings to model strategic behaviour in binary aggregation.


\subsection{Monotonic Goals}

As shown in Examples \ref{example:mono} and \ref{example:cubes}, when goals are not cubes it is possible to construct constant aggregation games in which truthful strategies are not dominant. In this section we find sufficient conditions on the class of goals for which there exists a truthful strategy that is weakly dominant, and we build further on this result by showing that, in constant aggregation games where players' goals are `aligned' in a precise sense, we can still construct a truthful NE that is also efficient (Theorem \ref{coro:survival}).

Up until now we referred to aggregation games assuming players' goals were cubes (Definition \ref{def:JA1}). In this section we drop such constraint and, when referring to aggregation games, we will make explicit what type of goal formulas we are considering.


We first need some additional notation and a definition. Let $B$ be a valuation over $\PS$, denote with $B_{-j}$ the restriction of $B$ to $\PS \setminus \{p_j\}$. 

\begin{definition} A formula $\phi \in \L_\PS$ is:
\begin{description}
\item{{\bf {\em positively monotonic}}} in $p_j$, where $p_j$ occurs in $\phi$, if for all $B, B' \in \D$ such that $B\models \phi$, $B_{-j}=B'_{-j}$, and $B' \models p_j$ then $B'\models \phi$.   
\item{{\bf {\em negatively monotonic}}} in $p_j$ if for all $B, B' \in \D$ such that $B\models \phi$, $B_{-j}=B'_{-j}$, and $B' \not\models p_j$ then $B'\models \phi$. 
\end{description}
A formula $\phi$ is {\bf {\em monotonic}} if for every $p_j$ occurring in $\phi$, it is either positively or negatively monotonic in $p_j$.
\end{definition}
Let us mention a few examples. Formulas that are satisfied by only one model are monotonic. Similarly, cubes (i.e., conjunctions of literals) are monotonic. Not all monotonic formulas are, however, cubes: disjunctions of literals, which were used to construct Example~\ref{example:mono}, are monotonic but are not cubes in general. Again, Example~\ref{example:cubes} provides a good example of formulas which are not monotonic in the above sense.

\medskip

We have shown (Proposition \ref{lemma:weak}) that when goals are cubes, every truthful strategy in a constant aggregation game is weakly dominant.
Although this fails to be true when goals are monotonic (Example \ref{example:mono} offers a counterexample to such claim), we can still establish the following weaker result:
\begin{lemma}\label{prop:dominant}
Let $\A$ be a constant aggregation game with monotonic goals for an independent and monotonic rule $F$. Then, for each player $i \in \N$ there exists a truthful strategy that is weakly dominant in $\A$. 
\end{lemma}
\begin{proof}
The proof is by construction. Recall Definition \ref{def:pref}. We build a ballot $B$ such that $B\models\gamma_i$ and show that for any ballot $B'\not =B$ and profile $\B$, we have that if $F(\B_{-i},B'_i)\models\gamma_i$ then $F(\B_{-i},B_i)\models\gamma_i$.
Consider a ballot $B''$ such that $B'' \models\gamma_i$ (such ballot exists as we assume goals to be consistent). Now build the ballot $B$ as follows. For any issue $p_j$, let:
\[
B(j) = 
\left\{
\begin{array}{ll}
1 & \mbox{if $\gamma_i$ is positively monotonic in $j$}\\
0 & \mbox{if $\gamma_i$ is negatively monotonic in $j$}\\
B''(j) & \mbox{otherwise}
\end{array}
\right.
\]
By the assumption of monotonicity of $\gamma_i$ the above construction guarantees that $B \models \gamma_i$. 
We now show that for any $B'\not = B$ we have that if $F(\B_{-i},B'_i)\models\gamma_i$ then $F(\B_{-i},B_i)\models\gamma_i$ as well, for any ballot profile $\B$, thereby establishing the claim. So take such a ballot $B'$ and assume that $F(\B_{-i},B')\models\gamma_i$. Now let $n$ be the Hamming distance\footnote{We recall that the hamming distance $H(B,B')$ between two valuations (vectors) $B$ and $B'$ is defined as follows: $H(B,B')=\sum_j |B(j)-B'(j)|$. \label{foot:ham}}
between $B'$ and $B$ and consider the sequence $B' = B^0, \ldots, B^n = B$ such that for all $1 \leq k \leq n$, $B^{k +1}$ is equal to $B^k$ except for the value of one of the issues (that is, the sequence consisting of a minimal number of value swaps to turn $B'$ into $B$). 
By construction of $B$, each $B^k$ is the result of swaps $0 \mapsto 1$ for issues in which $\gamma_i$ is positively monotonic, and $1 \mapsto 0$ for issues in which $\gamma_i$ is negatively monotonic,
and possibly other swaps on variables not occurring in~$\gamma_i$.

We now prove that
$F(\B_{-i},B^k)\models\gamma_i$ for all $0 \leq k \leq n$. We proceed by induction over $k$. 
For $k = 0$ the claim is true by assumption.  
Assume the claim is true for $k$ (IH), 
i.e., $F(\B_{-i},B^k)\models\gamma_i$. 
By the observation above, $B^{k+1}$ can be obtained from $B^k$ in only two ways: \fbox{a} by a swap $0 \mapsto 1$ for some issue $j$ in which $\gamma_i$ is positively monotonic; or \fbox{b} by a swap $1 \mapsto 0$ for some issue $j$ in which $\gamma_i$ is negatively monotonic. 
If \fbox{a} is the case
then in profile $(\B_{-i},B^{k+1})$ there is one more voter, namely $i$, who supports $p_j$. 
We distiguish two cases: 
\begin{description}

\item \fbox{a1} If $F(\B_{-i},B^k)\models p_j$, by independence and monotonicity of $F$, we can infer that also $F(\B_{-i},B^{k+1})\models p_j$ and therefore $F(\B_{-i},B^{k+1}) = F(\B_{-i},B^{k})\models\gamma_i$ by IH. 

\item \fbox{a2} If $F(\B_{-i},B^k)\not\models p_j$, then either $i$ is not pivotal on issue $j$, and therefore $F(\B_{-i},B^{k+1}) = F(\B_{-i},B^{k}) \models\gamma_i$ by IH. Or $i$ is pivotal at step $k$, and then $F(\B_{-i},B^{k+1})\models p_j$. 
We now need to observe that by the assumption we have that $\gamma_i$ is positively monotonic in $p_j$ to conclude that, also in this case, $\maj(\B_{-i},B^{k+1})\models\gamma_i$. 
\end{description}

\noindent
This proves the claim for the first case. 
If \fbox{b} is the case, we reason in a symmetric fashion to conclude that that $F(\B_{-i},B)\models\gamma_i$, as claimed.
\end{proof}


A direct consequence of Lemma~\ref{prop:dominant} is the existence of a truthful NE in weakly dominant strategies for constant aggregation games with monotonic goals. However, in order to obtain results analogous to Theorem~\ref{theorem:survival} in the context of monotonic goals we need first to establish sufficient conditions for the existence of truthful {\em and} efficient NE. 

\smallskip

Let us first introduce some auxiliary terminology. Two monotonic goals $\gamma_1$ and $\gamma_2$ are {\bf aligned} if they are positively monotonic on the same set of issues and negatively monotonic on the same set of issues. We establish the following result:

\begin{lemma}\label{coro:constant}
Let $\A$ be a constant aggregation game with monotonic goals, for an independent and monotonic aggregator $F$. Assume moreover that $\A$ is $C$-consistent, that $C\in \Win_{\bigwedge\Gamma}$ where $\Gamma = \set{\gamma_i \mid i\in C}$, and that all goals of agents $i\in C$ are aligned. Then, $\A$ has a truthful and $C$-efficient NE (in weakly dominant strategies).
\end{lemma}

\begin{proof}
Let $B''$ be a ballot such that $B''\models \bigwedge_{i\in C} \gamma_i$. $B''$ exists by assumption of $C$-consistency. $B''$ can be used in the proof of Lemma~\ref{prop:dominant} to construct a truthful NE in weakly dominant strategies. 
Now observe that, since all goals of agents in $C$ are aligned, the weakly dominant equilibrium so constructed is composed by a unanimous ballot choice for individuals in $C$, which we shall call $B^*$. 
Since $C$ is a winning coalition for $\bigwedge_{i\in C} \gamma_i$, we have that 
$F(\B)\models \bigwedge_{i\in C} \gamma_i$ and the equilibrium is therefore also $C$-efficient as claimed.
\end{proof}

Everything is now in place to prove a variant of Theorem \ref{theorem:survival} for constant aggregation games where players hold monotonic goals.

\begin{theorem}\label{coro:survival} 
Let $\A$ be an $\N$-consistent aggregation game with monotonic goals, for an independent and monotonic aggregator $F$. Assume that all individual goals are aligned, and that $N \in \Win_{\bigwedge \Gamma}$ for $\Gamma = \set{\gamma_i \mid i\in \N}$. Then, there exists an $\N$-truthful and $\N$-efficient NE of $\A$ which is also a SNE of $\A$.
\end{theorem}

\begin{proof}[Sketch of proof]
Let $\B$ be an $\N$-truthful and $\N$-efficient NE of $\A$, which exists by Lemma \ref{coro:constant}. We can therefore construct a transfer profile $\tau$ such that $\B$ is a weakly dominant strategy equilibrium of $\tau(\EA)$. The construction proceeds in the same way as the construction, for a transfer profile of the same type, given in the proof of Lemma \ref{lemma:tau}.
We then have to show that $\B$ is a SNE of $\EA$. To do so we proceed towards a contradiction and suppose that there exists a profitable deviation $\tau^*_i$ for some player $i$ in $\EA$. The argument used in the proof of Theorem \ref{theorem:survival} to the effect that no such profitable deviation exists can be applied directly, thereby establishing the claim.
\end{proof}

The assumption of monotonicity cannot be further weakened. If goals are allowed to be non-monotonic, then it is possible to construct (constant) aggregation games where, for some player, no truthful strategy is weakly dominant as witnessed by Example~\ref{example:mono}.

\subsection{Arbitrary Goals}

As a final generalisation of our results, we consider the question of whether pre-vote negotiations can support some good equilibria once we lift any restriction on the logical structures of players' goals. We answer this question positively, but only at the cost of restricting the result to a specific aggregator, issue-wise majority.


\begin{theorem}\label{theorem:arbitrary} 
Let $\A$ be a uniform $\N$-consistent aggregation game for $\maj$ with arbitrary goals in $\L_\PS$.
There exists a $\N$-efficient and $\N$-truthful NE of $\A$ that is a SNE of $\A$.
\end{theorem}
\begin{proof}
Let $\prof{B}$ be a $\N$-efficient and $\N$-truthful NE of $\A$ such  $B_i=B_j=B$ for each $i,j\in \N$ where $B \models \bigwedge_{i\in N} \gamma_i$. 
We construct a transfer function $\tau$ such that $(\tau, \B)$ is a solution outcome of $\EA$. Such equilibrium exists by the assumption of $\N$-consistency. Let $M$ be as in \eqref{eq:M}. 
For all $i,j \in \N$, let $\tau_i(\prof{B}',j) = 2Mx$, whenever $B'_j\neq B$,  where $x$ is the Hamming Distance of ballot $B'_j$ w.r.t. $B$, and $\tau_i(\prof{B}',j) = 0$, whenever $B'_i= B$.  In words, each player $i$ is committing to pay to the other players the quantity $2xM$, with $x$ increasing the further $i$'s ballot is from $B$.  We show that  $(\tau, \prof{B})$ is a solution outcome of $\EA$.
Suppose towards a contradiction that this is not the case, that is, that there exists a profitable deviation by player $i$ from the strategy profile that induces the transfer-ballot profile $(\tau, \prof{B})$. Since $\B$ is a NE in $\tau(\A)$ such deviation has to involve a different transfer $\tau^*_{i}$ by player $i$ in the pre-vote phase. We identify two cases:

\fbox{Case 1} The game $\tau'(\A)$, where $\tau'=(\tau^*_{i},\tau_{-i})$, has no associated NE. We reason exactly like in Case 1 of the proof of Theorem \ref{theorem:survival}. The argument shows that for every ballot profile $\B'$ in $\tau'(\A)$, where $\tau'=(\tau^*_i, \tau_i)$ $i$'s payoff is at most as much as what it obtains from $\B$ in $\tau(\A)$ and, therefore, that the deviation to $\tau^*_i$ is not profitable.

\fbox{Case 2} The game $\tau'(\A)$, where $\tau'=(\tau^*_{i},\tau_{-i})$, has a NE.
Since the strategy profile supporting $(\tau, \B)$ in $\EA$ is assumed to be a solution, $i$'s deviation determines a transfer-ballot profile $(\tau', \B')$ where $\B'$ may or may not be a NE. However, if $i$'s deviation toward $(\tau', \B')$, where $\B'$ is not a NE, is profitable, $i$ must have another at least as profitable deviation determining the transfer-ballot profile $(\tau', (B^*_i, \B'))$, where $(B^*_i, \B')$ is a NE of $\tau'(\A)$. We therefore assume w.l.o.g., that the transfer-ballot profile $(\tau', \B')$ determined by $i$'s deviation is such that $\B'$ is a NE of $\tau(\A)$. Observe now that in order for $(\tau', \prof{B}')$ to be profitable for $i$ there must exist a $j \neq i$ such that $B'_j \neq B_j$. That is, there exists at least another agent $j$ besides $i$ who deviates from ${B}_j=B$ in $\prof{B}'$. This is because $\prof{B}$ is a $\N$-efficient NE by assumption and by construction of $\tau'$, we have that $\tau'_{j}(\prof{B}',i)=0$ whenever $B'_{j} = B$. So let there be $k \geq 1$ players  $j \neq i$ for which $B'_{j} \neq B$. Notice also that $k>\frac{n}{2}$ else $\maj(\prof{B}')=B$. Consider now some such $j$.
We have three sub-cases:
 
 \fbox{Case 2a} There is a $j\in N$ such that $j\neq i$, $B'_j \neq B_j$ and $\maj(\prof{B}')\not\models \gamma_j$. Since $\prof{B}'$ is a NE, we have that no $B''_j$ is such that $\maj({B}''_{j},\prof{B}'_{-j}) \models \gamma_j$, so $B'_j$ is a best response only by virtue of $j$'s payoff. W.l.o.g. we can assume that the Hamming distance between $B'_j$ and $B$ is $1$, that is, all voters are voting as close as possible to $B$, therefore incurring the lowest possible non-$0$ transfers towards the other voters. 
We can now reason exactly like in Case 2a  of the proof of Theorem \ref{theorem:survival} concluding that the constructed deviation $\tau^{*}$ cannot be profitable.
 
%
%
%


\fbox{Case 2b} There is a $j\in \N$ such that $j\neq i$, $B'_j \neq B_j$,  $\maj(\prof{B}') \models \gamma_j \mbox{ and }$ for which there is $B^{\circ}_j$, with $H(B,B^{\circ}_j)< H(B,B'_j)$, such that: $\maj(B^{\circ}_j,\prof{B}'_{-j}) \models \gamma_j$. Ballot $B'_j$ is therefore a best response for $j$ only by virtue of her payoff. The argument used in 2a  of the proof of Theorem \ref{theorem:survival} applies again to establish that $\tau'(\pi)_i(\prof{B}') \leq \tau(\pi)_i(\prof{B})$, and therefore that the constructed deviation $\tau^*_i$ cannot be profitable.

\fbox{Case 2c} For all $j\in \N$ with $j\neq i$ and such that  $B'_j \neq B_j$ we have that  $\maj(\prof{B}') \models \gamma_j \mbox{ and }$ there is no $B^{\circ}_j$, with $H(B,B^{\circ}_j)< H(B,B'_j)$, such that $F(B^{\circ}_j,\prof{B}'_{-j}) \models \gamma_j$. Consider now an arbitrary issue $k$. We have two cases. Either 
$\maj(\prof{B}')(k) = B(k)$ or not. 
\fbox{Case 2c(i)} We have that $\maj(\prof{B}')(k) = B(k)$. But then, for each such $j$, it must be the case that $B'_j(k) = B_j(k)$, because we have that $\maj(\prof{B}') \models \gamma_j \mbox{ and }$ there is no $B^{\circ}_j$, with $H(B,B^{\circ}_j)< H(B,B'_j)$, such that $F(B^{\circ}_j,\prof{B}'_{-j}) \models \gamma_j$ and ballot $B'_j$ is therefore a best response for $j$ only by virtue of her payoff, so the argument used for Case 2a  of the proof of Theorem \ref{theorem:survival} can be applied. 
\fbox{Case 2c(ii)} We have that $\maj(\prof{B}')(k) \neq B(k)$. Consider some $j\neq i$ for which $B'_j(k) \neq B(k)$, which exists by the assumptions.
Again by the assumptions above we have that $j$ cannot choose ballot $B^{+}$, such that $B'_j(k')= B^{+}_j(k')$, for each $k'\neq k$ and $B^{+}_j(k) = B_j(k)$, without compromising her goal.  This means that in $\prof{B'}$, $j$ is pivotal for $k$, i.e.,  $\maj(B^{+}_j,\prof{B}'_{-j})(k) =  B(k)$. Also notice that, being $k$ arbitrary, this is true for each issue $k$ such that $B'_j(k)\neq B_j(k)$. But, because of case 2c(i), this means that $\maj(B^{+}_j,\prof{B}'_{-j}) = \maj(\prof{B})$ and $\maj(\prof{B})\models \gamma_j$.  However this contradicts the assumptions behind Case 2c, showing that 2c(ii) is impossible. But then, given the reasoning in Case 2c(i), it must be the case that $\prof{B} = \prof{B}'$, contradicting the assumptions, and therefore showing that 2c altogether is impossible.

We conclude that in all cases the constructed deviation $\tau^*_i$ is not profitable, thereby establishing the claim.
\end{proof}
Let us comment on the above result in the light of the findings of this section. While Theorems~\ref{theorem:survival} and \ref{prop:surviving} showed that with cubes pre-vote negotiations exactly sustain the set of good equilibria, weakening the constraints imposed on players' goals reduces this set significantly. Theorem~\ref{coro:survival} shows that when goals are monotonic at least one good Nash equilibrium survives, provided the aggregator is independent and monotonic and the game constant. Theorem~\ref{theorem:arbitrary} shows instead that if we restrict ourselves to issue-wise majority, we do not need any restriction on players' goals in order for at least one good Nash equilibrium to survive. All in all, in specific classes of aggregators, although pre-vote negotiations cannot sustain all good equilibria, they {\em can} sustain some, even when removing any logical constraint on players' goals.


\section{Discussion \& Related Work} \label{sec:related}

In this section we discuss our results from two points of view. First, we sketch how our results can be applied to the preservation of logical consistency when aggregation occurs on logically interconnected issues, which is the key problem of judgment aggregation. Second, we relate aggregation games to the influential notion of boolean game.

\subsection{Pre-vote Negotiations and Collective Consistency}\label{section:constraints}

We showcase an application of endogenous aggregation games to binary aggregation with constraints, or judgment aggregation \citep{endriss16judgment,GrossiPigozzi2014}, where individual ballots need to satisfy a logical formula, the {\em integrity constraint}, in to be considered feasible or admissible. In case each individual provides an admissible ballot, the obvious question is whether the outcome of a given aggregation rule will be admissible, as well. Here is an instance of this problem.

\begin{example}\label{example:discursive}
Consider the scenario in Table~\ref{figure:discursive_dilemma}. Suppose we impose the integrity constraint $p \rightarrow (q \vee r)$, making ballot $(1,0,0)$ inadmissible. 
All individual ballots in the example satisfy this requirement but the majority ballot does not. 
\end{example}
Paradoxical situations as those in Example~\ref{example:discursive} can be viewed as undesirable outcomes of aggregation games.
%
%
Building on the example, assume that each agent to have the following goals:
$\gamma'_A =  p, 
\gamma'_B =   q, 
\gamma'_C =  \neg r$.
Let $\pi_A=\pi_B=\pi_C$ be constant payoff functions. Observe that parties' goals are all consistent with the integrity constraint $r \rightarrow (p \vee q)$, and that the admissible ballot $(1,1,0)$ satisfies all of them. Given these goals, the profile in Table~\ref{figure:discursive_dilemma} shows a truthful NE that, however, does not satisfy neither the goal of party~$B$ nor the integrity constraint $p \rightarrow (q \vee r)$. However, this equilibrium is {\em not} surviving because, intuitively, party $B$ could transfer enough utility to party $C$ for it to vote for $q$. 


%
But the key question is whether we can guarantee that inadmissible equilibria do not survive.\footnote{That voting paradoxes can be studied from an equilibrium refinement perspective is an old but rather underexplored idea \citep{gueth91majority}.} 
It is easy to see that if the integrity constraint is implied by some player's goal---intuitively, the player internalises consistency itself as a goal---then $\N$-truthful and $\N$-efficient equilibria will satisfy the integrity constraint and, by our results, they will be surviving in games with well-behaved aggregators (Theorem \ref{theorem:survival}). Vice versa, since only equilibria survive which are efficient for some winning coalition, with some extra assumptions on the aggregator (Theorem \ref{prop:surviving}), only collectively consistent outcomes are generated by the aggregation. 


\subsection{Boolean Games}

Boolean games \citep{harrenstein01boolean} are a logic-based representation of strategic interaction, where a set of individuals $A=\{1, 2, \ldots, m\}$ is assigned control of a set of propositional variables $P=\{p_1, p_2, \ldots, p_k\}$. Specifically, propositions are partitioned among the agents, i.e., each agent is assigned unique control over a subset of them, and each agent can decide to set the propositional variables he or she controls to true or false, in such a way that the final outcome of the boolean game is determined by the agents' truth value assignment on the variables each of them controls. 
Finally, each agent is equipped with a goal formula, i.e., a formula of propositional logic over the set of variables $P$. Typically, although agents have control over some propositional variables, they might not be able to realise their goal formula on their own.


Boolean games  can be seen as a very basic form of aggregation games, as in Definition~\ref{def:JA1}. 
That is, a boolean game $\mathcal{B}$, defined over $A$ and $PS$, can be seen as an aggregation game of the following form:
$\mathcal A^{\mathcal{B}} =  \tuple{\N^{\mathcal{B}},\I^{\mathcal{B}}, F^{\mathcal{B}}, \set{\gamma^{\mathcal{B}}_i}_{i\in\N}, \set{\pi^{\mathcal{B}}_i}_{i\in\N}}$
where: $\N^{\mathcal{B}}=A$ is the set of players, $\I^{\mathcal{B}} = PS$ is the set of issues, $F^{\mathcal{B}}$ is a dictatorship on issue $j$ by $i$, for any issue $j\in \I$ controlled by $i\in N$, i.e., $F^{\mathcal{B}}(\B)(j)=  B_i(j)$, $\gamma^{\mathcal{B}}_i$ is the goal formula  for $i$ in $\mathcal{B}$, and each $\pi^{\mathcal{B}}_i$ is constant. So a boolean game can be seen as an aggregation game where each individual is a dictator for the variables he or she controls. The goals are formulated on the outcome of the individuals' assignments and the payoff function plays no role, i.e., it is constant.

By the above reduction we are able to import the following complexity bounds from the boolean games literature:
\begin{proposition}
Let $\A$ be an aggregation game with arbitrary goals $\gamma_i$ for each $i\in \N$ and ${\prof{B}}$ a ballot profile.
\begin{enumerate}
\item The problem of verifying whether, for some transfer profile $\tau$, $\prof{B}$ is a NE of $\tau(\A)$ is {\textit{co-NP hard}};
\item  The problem of verifying whether, for some transfer function $\tau$ and ballot profile $\prof{B}'$ that is a Nash equilibrium of $\tau(\A)$, we have that  $F(\prof{B}') \models \bigwedge_{i\in N}\gamma_i$ is {\textit{$\Sigma^{2}_p$ hard}};
\item  The problem of verifying whether, for some transfer function $\tau$ and all ballot profiles $\prof{B}'$ that are a Nash equilibrium of $\tau(\A)$, we have that  $F(\prof{B}') \models \bigwedge_{i\in N}\gamma_i$ is {\textit{$\Sigma^{2}_p$ hard}}.
\end{enumerate}
\end{proposition}
\begin{proof}
 The results follow from \cite[Proposition 1]{wooldridge13incentive}, \cite[Proposition 6]{wooldridge13incentive} and \cite[Proposition 14]{wooldridge13incentive} respectively, together with the translation given above. 
\end{proof}
Boolean games have been extended in a number of ways, some of which will be dealt with next. It is however worth mentioning the extension by \citep{Gerbrandy06} to cooperative structures where coalitions are able to share the control of a propositional variable. Although the purpose is to study the logical properties of shared control and not the property of social choice functions, \citep{Gerbrandy06}  basically works with aggregation games with arbitrary aggregators, but without goals and without payoff function.


\paragraph{Boolean games and incentive engineering}

A class of boolean games that is relevant to our framework is boolean games with arbitrary payoffs, i.e., aggregation games of the form $\mathcal A^{\mathcal{B}} =  \tuple{\N^{\mathcal{B}},\I^{\mathcal{B}}, F^{\mathcal{B}}, \set{\gamma^{\mathcal{B}}_i}_{i\in\N}, \set{\pi^{\mathcal{B}}_i}_{i\in\N}}$ where the payoff is not necessarily constant. These boolean games have been introduced to account for efforts (or costs) in performing actions \citep{grant:2011b,wooldridge13incentive,Turrini16,harrenstein_etal:2014a}. When comparing two outcomes, a player will prefer the ones satisfying the goal, but will otherwise look at minimising the effort. This amounts to the same idea of having a payoff that is taken into account only in case goal satisfaction cannot discriminate between outcomes.

Boolean games with costs have been looked at from the point of view of {\em incentive engineering}, allowing payoffs to be manipulable, either by exogenous taxation mechanisms as in the work of \citep{wooldridge13incentive} and  \citep{harrenstein_etal:2014a}, or by endogenous negotiation as in the work of \citep{Turrini16}. In the exogenous setting, an external system designer can impose taxes on players' actions, by effectively influencing their decision-making towards the realisation of his or her own goal formula. In the endogenous setting, individuals undergo a pre-play negotiation phase and try to improve upon their final payoff using side-payments.

This has lead to the discovery of the existence of {\em hard equilibria} \citep{harrenstein_etal:2014a}, i.e., pure Nash-equilibria that cannot be removed by external incentives. Notice that their presence is even more frequent in endogenous settings, due to the fact that side-payments are a weaker form of manipulation than external intervention, as observed by \citep{Turrini16}. Endogenous boolean games are essentially endogenous aggregation games applied to a more restricted secting. However the idea of hard equilibrium does carry over to aggregation games in general. Think for instance of a situation in which there is only one issue, $p$, and only one winning coalition, $\N$. If everyone wants $p$ to be true, then the profile in which everyone votes for $p$ is an equilibrium that is impossible to remove by manipulating payoffs.

\section{Conclusions}\label{sec:conclusions}

The paper has proposed a model of pre-vote negotiation for games of binary aggregation. Although a number of papers in the literature on voting games have focused on the problem of avoiding undesirable equilibria, no model studying strategic behaviour in a pre-vote negotiation phase has so far been proposed. We used the model to show how pre-vote negotiations can restore the efficiency of truthful voting in such games. 

More specifically we established the following main results.
First, in uniform aggregation games for independent and monotonic aggregators where voters' goals are cubes, if an equilibrium is truthful and efficient it will be selected by equilibrium behaviour in the pre-vote negotiation phase (Theorem \ref{theorem:survival}). The `only if' variant of this claim holds with respect to the efficiency alone of the equilibria (Theorem \ref{prop:surviving}).
Second, in constant aggregation games for independent and monotonic aggregators where voters' goals are monotonic and aligned, there always exist a truthful and efficient equilibrium that will be selected by equilibrium behaviour in the pre-vote negotiation phase (Theorem \ref{coro:survival}).
Third, in uniform aggregation games for issue-wise majority, with arbitrary individual goals, there always exist a truthful and efficient equilibrium hat will be selected by equilibrium behaviour in the pre-vote negotiation phase (Theorem \ref{theorem:arbitrary}).

Our work is a first step towards the development of a body of theoretical results on how strategic interaction preceding voting influences the outcomes of group decision-making.

\section*{Acknowledgments}

Davide Grossi acknowledges support for this research by EPSRC under grant EP/M015815/1.
Paolo Turrini acknowledges support from Imperial College London under the Imperial College Research Fellowship ``Designing negotiation spaces for collective decision-making" (DoC AI1048).


\vskip 0.2in
\bibliographystyle{apalike}
\bibliography{jagames}

\end{document}